\documentclass[runningheads]{llncs}

\usepackage[T1]{fontenc}
\usepackage{amsmath}            %
\usepackage{newtxtext}
\usepackage[varvw]{newtxmath}
\usepackage{microtype}

\usepackage{graphicx}
\graphicspath{{figures/}}
\usepackage{booktabs}
\usepackage{array}
\usepackage{tikz}
\usetikzlibrary{arrows.meta,positioning}

\usepackage{xcolor}
\usepackage[hidelinks]{hyperref}

\hypersetup{
  pdftitle  = {Defining Purpose-Limited Secrets},
  pdfauthor = {Bhumika Mittal, Aalok Thakkar},
  pdfsubject = {Cryptographic purpose limitation and purpose-limited secrets},
  pdfkeywords = {Purpose limitation, Confinement, Functional cryptography, E-cash, Capabilities, Data protection},
}

\newcommand{\algo}[1]{\mathsf{#1}}
\newcommand{\Op}{\algo{Op}}
\newcommand{\Succ}{\algo{Succ}}
\newcommand{\adm}{\algo{adm}}
\newcommand{\Setup}{\algo{Setup}}
\newcommand{\Constrain}{\algo{Constrain}}
\newcommand{\Use}{\algo{Use}}
\newcommand{\Attenuate}{\algo{Attenuate}}
\newcommand{\Make}{\algo{Make}}
\newcommand{\Trace}{\algo{Trace}}
\newcommand{\Sign}{\algo{Sign}}
\newcommand{\Verify}{\algo{Verify}}
\newcommand{\Kg}{\algo{Kg}}
\newcommand{\Sg}{\algo{Sg}}
\newcommand{\Vf}{\algo{Vf}}
\newcommand{\redeem}{\algo{redeem}}
\newcommand{\range}{\algo{range}}
\newcommand{\negl}{\algo{negl}}
\newcommand{\poly}{\mathrm{poly}}
\newcommand{\ch}{\algo{ch}}

\newcommand{\pp}{\mathit{pp}}
\newcommand{\msk}{\mathit{msk}}
\newcommand{\sk}{\mathit{sk}}
\newcommand{\vk}{\mathit{vk}}
\newcommand{\bsk}{\mathit{bsk}}
\newcommand{\bvk}{\mathit{bvk}}
\newcommand{\ask}{\mathit{ask}}
\newcommand{\avk}{\mathit{avk}}
\newcommand{\id}{\mathit{id}}

\newcommand{\cA}{\mathcal{A}}
\newcommand{\cB}{\mathcal{B}}
\newcommand{\cP}{\mathcal{P}}
\newcommand{\G}{\mathbb{G}}
\newcommand{\Zq}{\mathbb{Z}_q}
\newcommand{\Sdel}{\Sigma_{\mathsf{del}}}

\newcommand{\plgame}[1]{\textsf{PL-#1}}
\newcommand{\oracle}[1]{O_{\mathsf{#1}}}
\newcommand{\Adv}[2][]{\mathrm{Adv}^{\mathsf{#2}}_{#1}}

\newcommand{\RefSem}{\algo{Ref}}
\newcommand{\Iss}{\algo{Iss}}
\newcommand{\Hld}{\algo{Hld}}
\newcommand{\AKg}{\algo{AKg}}
\newcommand{\Sim}{\algo{Sim}}
\newcommand{\Chal}{\algo{Chal}}
\newcommand{\Leak}{\algo{Leak}}
\newcommand{\Acc}{\algo{Accept}}
\newcommand{\iid}{\mathit{iid}}
\newcommand{\Kw}[1]{\textbf{#1}}

\newsavebox{\gamesavebox}
\newenvironment{gamebox}[1]
  {\begin{lrbox}{\gamesavebox}\begin{minipage}[t]{#1}\small\raggedright}
  {\end{minipage}\end{lrbox}\fbox{\usebox{\gamesavebox}}}
\newcommand{\gametitle}[1]{\underline{#1}\par\smallskip}
\newcommand{\sendright}[1]{\begin{tabular}[b]{@{}c@{}}#1\\[-1.4ex]\makebox[3.1cm]{\rightarrowfill}\end{tabular}}
\newcommand{\sendleft}[1]{\begin{tabular}[b]{@{}c@{}}#1\\[-1.4ex]\makebox[3.1cm]{\leftarrowfill}\end{tabular}}

\begin{document}

\title{Defining Purpose-Limited Secrets\thanks{Full version of a paper accepted at the 22nd International Conference on Information Systems Security (ICISS 2026), to appear in Springer LNCS.}}

\author{\begin{tabular}[t]{c@{\hspace{3em}}c}
Bhumika Mittal & Aalok Thakkar\\
{\normalfont\small School of Computer Science} & {\normalfont\small Vachani School of Advanced Computing}\\
{\normalfont\small Georgia Institute of Technology} & {\normalfont\small Ashoka University}
\end{tabular}}
\authorrunning{B. Mittal and A. Thakkar}
\institute{}

\maketitle

\begin{abstract}
A cryptographic secret is issued for a purpose but grants a capability, and the capability is usually larger. Deployed systems state the purpose in policy and enforce only the capability. We make the purpose a property of the secret. A mechanism sees operations, not reasons, so a scheme enforces an \emph{admissible set} $A(P)$ of operations that stands in for a declared \emph{intended use}~$I$; whether $I$ captures the human purpose is a modeling obligation that marks where policy takes over. Against the same~$I$ we define three regimes: under \emph{confinement} an operation outside the intended use is infeasible, under \emph{mediation} a trusted component refuses it, and under \emph{accountability} use beyond a stated bound is possible but attributed to the holder. The confinement games, for unpredictability and indistinguishability, coincide with the key-query forms of functional-signature unforgeability and constrained-PRF pseudorandomness and correspond to simulation-secure functional encryption within its feasibility boundary; stated against~$I$, they also register a predicate that permits too much. A payment token uses all three regimes on one secret: it pays one capped charge at one merchant, can be narrowed but not widened, is checked by the merchant, and identifies the withdrawing account if spent twice. Attenuation unforgeability reduces to unforgeable signatures and a collision-resistant hash without random oracles; traceability and non-frameability hold under discrete log and unforgeable signatures in the random-oracle model. We present this framework as a common specification against which existing primitives are stated and compared.

\keywords{Purpose limitation \and Confinement \and Functional cryptography \and E-cash \and Capabilities \and Data protection.}
\end{abstract}

\section{Introduction}\label{sec:intro}

A person who asks an assistant to pay one invoice must hand over a payment credential, and the credential can empty the account. The same gap appears wherever a secret is shared for a task. A hospital gives an analytics firm a decryption key to compute infection rates, and the key decrypts every patient record. A team gives its build server a signing key for releases of one product, and the key signs anything, malware included. Each secret is given for a purpose but grants a capability, and the capability is larger.

Deployed systems write the purpose down and enforce it outside the key. An X.509 certificate lists the uses its key is meant for~\cite{rfc5280}, and a PKCS\#11 token tags each key with the operations it may perform~\cite{pkcs11v3}. In both cases the key can perform any operation, and only a trusted component that reads the tag stands in the way; bypassing it restores the full capability~\cite{clulow2003pkcs11,bortolozzo2010pkcs11}. This paper asks whether ``this key works only for purpose~$P$'' can be proved of the key itself, as one proves that a signature is unforgeable, so that the limit holds when no monitor is present. Where that is out of reach, it asks what a trusted checker or an audit log can guarantee instead.

Such a proof can concern only what a mechanism observes. The same \$50 paid to the same merchant may settle the invoice or pay a bribe, and a decryption may serve analytics or re-identification depending on what happens to the plaintext afterwards. A scheme therefore limits operations, not reasons. Each purpose~$P$ comes with an \emph{admissible set} $A(P)$ of operations that the mechanism permits, and the issuer declares an \emph{intended use}~$I$, which says which operations each purpose is meant to cover. ``Usable only for purpose~$P$'' is shorthand for ``usable only for the operations in~$A(P)$''. Whether $A$ and $I$ capture the human purpose is a modeling obligation, not a theorem, and it marks where cryptography hands off to policy.

\paragraph{Three regimes.}
Against one intended use we define three regimes, which give different kinds of guarantee. Under \emph{confinement} an operation outside $I(P)$ is computationally infeasible, a hardness property. Under \emph{mediation} a trusted monitor or verifier refuses it, a trust assumption. Under \emph{accountability} use beyond a stated bound is possible, and a trace attributes it to the holder after the fact. Functional encryption and constrained pseudorandom functions (PRFs) confine, hardware key stores and certificate flags mediate, traitor tracing holds to account, and electronic cash confines and holds to account.

\paragraph{A payment token.}
For the invoice task we build a token whose single secret uses all three regimes. It authorizes one charge, up to a cap, at one merchant, within a time window. A holder can narrow it but not widen it (confinement), the merchant checks each charge (mediation), and a second spend reveals the account that withdrew it (accountability).

\paragraph{What is new.}
We claim no new primitive or hardness assumption: the confinement games coincide with existing notions in key-query form, and the token is built from known parts (Section~\ref{sec:token}). The paper contributes a common specification, in which every guarantee, in every regime, is stated against the declared intended use~$I$ and not against the predicate a primitive enforces. A predicate that permits too much can therefore fail the specification while it satisfies the primitive's own game (Section~\ref{sec:signing}), and the regimes can be related to one another (Section~\ref{sec:mediation-accountability}). The token's results are proved one regime at a time; a joint security statement and a general composition theorem are open (Section~\ref{sec:discussion}).

\paragraph{Contributions.}
The paper contributes the following.
\begin{itemize}
  \item A specification of purpose limitation (Section~\ref{sec:define}): syntax, the intended use and its proxy, and games for the three regimes. The confinement games coincide with key-query unforgeability of functional signatures and with constrained-PRF pseudorandomness against constrained-key queries, cover policy-based signatures, and correspond to functional encryption within its feasibility boundary.
  \item The payment token (Section~\ref{sec:token}), with proofs of attenuation unforgeability, traceability, and non-frameability, a comparison with macaroons, SPKI/SDSI, e-cash, and delegatable credentials by the properties each achieves, and an anonymous variant.
  \item A discussion of what the common specification adds and of where cryptographic purpose limitation hands off to policy (Section~\ref{sec:discussion}).
\end{itemize}
Section~\ref{sec:limit} derives the requirements from the payment task, and Section~\ref{sec:related} discusses related work.

\section{Requirements for a Purpose-Limited Secret}\label{sec:limit}

The invoice task of Section~\ref{sec:intro} is the running example of the paper. A user wants an assistant, human or software, to pay one invoice of at most \$50 to one named merchant within the current week. The user's payment credential can do far more, and the aim is a smaller secret that performs this task and nothing else. Section~\ref{sec:token} constructs a token for it.

\paragraph{Requirements.}
The secret must not authorize a payment the task excludes: a different merchant, a larger amount, or a later date. Delegation must only shrink authority: if the assistant sub-contracts part of the task, it derives and passes on a narrower secret (an operation we call \emph{attenuation}), say at most \$20 to the same merchant, and the recipient cannot enlarge it back to \$50 or extend it to a second merchant. Finally, the secret should work once.

\paragraph{Enforcement.}
The first two can rest on hardness: producing a valid payment outside the granted terms, or a delegation chain that widens them, is infeasible, and validity is a public check (confinement). The merchant's check of amount, identity, and date at redemption is mediation, as strong as the trust placed in the checker~\cite{bortolozzo2010pkcs11}. The third requirement needs accountability. A merchant that keeps no record cannot tell a second presentation of the token from the first, but a scheme can make the second use reveal the holder, as in electronic cash~\cite{chaum1988ecash}.

\paragraph{Scope of the guarantee.}
The token limits the merchant, the amount, and the date, which are visible when it is used; it cannot limit the reason for the payment. Purpose in the legal sense, as in Article~5(1)(b) of the EU General Data Protection Regulation (GDPR)~\cite{oecd1980guidelines,gdpr2016,wp29-2013purpose}, concerns why data is used and what happens downstream, which a mechanism cannot see. The guarantee is therefore stated for the observable operation ``pay amount $a$ to merchant $M'$ at time~$t$'', the position that formal treatments of purpose in audit and contextual integrity also take.

\paragraph{Further examples.}
The analytics and build-key examples of Section~\ref{sec:intro}, and a dataset licensed to train one model, motivate the definition but are not instances of the token. An analytics key that computes aggregates but cannot read a record is confinement of the indistinguishability kind, realized by functional encryption within its known limits (Section~\ref{sec:confinement}). A build key that signs one project's artifacts is confinement of the unpredictability kind, realized by policy-based signatures (Section~\ref{sec:signing}). A dataset licensed to train one model reaches the release of plaintext, which no confinement game governs (Section~\ref{sec:limits}).

\section{Defining Purpose Limitation}\label{sec:define}

A secret is purpose-limited when use outside its intended use is ruled out, refused, or attributed, depending on the regime. In this section, we specify what each purpose permits and is meant to permit, and give one game per regime.

\subsection{Purposes, Operations, and the Intended Use}\label{sec:purposes}

\begin{figure}[t]
  \centering
  \begin{tikzpicture}[font=\small, every node/.style={align=center}]
  \draw[dashed] (-3.9,-1.8) rectangle (3.9,1.8);
  \node at (0,1.45) {human purpose (unobservable)};

  \draw[thick] (-1.1,-0.4) ellipse [x radius=1.9, y radius=0.95];
  \draw[thick] ( 1.1,-0.4) ellipse [x radius=1.9, y radius=0.95];

  \node at (-2.2,0.85) {$I(P)$};
  \node at ( 2.2,0.85) {$A(P)$};

  \node[font=\footnotesize] at (-1.95,-0.4) {false deny\\$I\setminus A$};
  \node at ( 0,   -0.4) {$A\cap I$};
  \node[font=\footnotesize] at ( 1.95,-0.4) {false permit\\$A\setminus I$};
\end{tikzpicture}
  \Description{Diagram inside a dashed box labelled "human purpose (unobservable)". Two overlapping ellipses: the left one is I(P), the declared intended use; the right one is A(P), the admissible set the mechanism enforces. The part of I(P) outside A(P) is labelled "false deny, I minus A"; the overlap is labelled "A intersect I"; the part of A(P) outside I(P) is labelled "false permit, A minus I".}
  \caption{Fidelity of the proxy $A(P)$ to the intended use $I(P)$; the human purpose (dashed) is not observable. Soundness excludes $A\setminus I$, completeness $I\setminus A$. Games are stated against $I$, so a false permit an adversary can find is a break.}
  \label{fig:fidelity}
\end{figure}

A capability fixes what an unrestricted secret does and how one tells that it was done.

\begin{definition}[Capability]\label{def:capabilities}
A \emph{capability} is a triple $(\Op,\RefSem,\Succ)$: an \emph{operation family} $\Op$; a \emph{reference semantics} $\RefSem$, giving the correct result $\RefSem(o)$ of an unrestricted secret on~$o\in\Op$; and a \emph{success predicate} $\Succ$, with $\Succ(o,y)=1$ iff $y$ is a valid achievement of~$o$, and $\Succ(o,\RefSem(o))=1$. If results should be hard to produce, as for signatures, $\Succ$ is public.
\end{definition}

A public success predicate makes a forbidden achievement a checkable event, whoever checks it. If results should stay hidden, as for PRF values, $\Succ(o,y)=[\,y=\RefSem(o)\,]$ only states correctness and is not public, since a public test would tell a PRF value from a uniform one. Such capabilities are specified through the challenge and leakage of Game~\plgame{CONF-IND} (Definition~\ref{def:conf-ind}).

Purposes are ordered by how much they permit, and a predicate on a purpose and an operation decides what the mechanism permits.

\begin{definition}[Purposes and admissibility]\label{def:purposes}
Purposes form a bounded lattice $(\cP,\sqsubseteq,\sqcup,\sqcap,\top,\bot)$ under \emph{refinement}, where $P'\sqsubseteq P$ reads ``$P'$ is at most as permissive as~$P$'': any two purposes have a least upper bound $\sqcup$ and a greatest lower bound $\sqcap$, $\top$ is the most permissive purpose, and $\bot$ the least. An \emph{admissibility relation} $\adm:\cP\times\Op\to\{0,1\}$ fixes which operations are in-purpose, and the \emph{admissible set}, or \emph{proxy}, of~$P$ is
\[
  A(P) := \{\, o\in\Op : \adm(P,o)=1 \,\}.
\]
\end{definition}

\begin{definition}[Monotone attenuation]\label{def:monotone}
$\adm$ satisfies \emph{monotone attenuation} if $P'\sqsubseteq P \Rightarrow A(P')\subseteq A(P)$, with $A(\top)=\Op$ and $A(\bot)=\emptyset$.
\end{definition}

Monotonicity makes narrowing safe: a holder who passes on a narrower purpose passes on a subset of its admissible operations.

\paragraph{Example.}
In the payment task, an operation $o=\redeem(a,M',t)$ is a payment of amount $a$ to merchant $M'$ at time~$t$, and $\Succ(o,y)=1$ iff the merchant accepts the redemption transcript $y$ for~$o$. A purpose $P=(M,C,T)$ is a merchant, an amount cap, and a time window, and $\adm(P,o)=[\,M'\preceq M\wedge a\le C\wedge t\in T\,]$, where $M'\preceq M$ means that $M'$ is $M$ or one of its outlets. The invoice task is $P_0=(M,50,T)$ for this week's window~$T$, and the sub-contract is $(M,20,T)\sqsubseteq P_0$.

A mechanism enforces~$A$. What the issuer means is declared separately, as the intended use.

\begin{definition}[Intended use and fidelity]\label{def:fidelity}
A scheme is specified relative to an \emph{intended-use relation} $I\subseteq\cP\times\Op$, the target the proxy stands for; write $I(P):=\{o : (P,o)\in I\}$. The proxy $A$ is \emph{sound} for $I$ if $A(P)\subseteq I(P)$ for all $P$ (no false permit) and \emph{complete} if $A(P)\supseteq I(P)$ for all $P$ (no false deny).
\end{definition}

Figure~\ref{fig:fidelity} shows the regions. The proxy is \emph{exact} if $A=I$, as in the running example, where the issuer means ``pay at most $C$ to $M$ within~$T$'' and nothing else. Soundness is what a confinement claim needs when the proxy is not exact, and completeness, with correctness, lets the holder perform every intended operation. Because $I$ is declared, its fidelity to the human purpose is a hypothesis of every result, not a cryptographic property. An operation is \emph{off-purpose} for a set $Q$ of purposes if it lies outside $\bigcup_{P\in Q}I(P)$; the confinement and mediation games ask whether an adversary can achieve one.

\subsection{Syntax and Correctness}\label{sec:syntax}

The syntax is that of a constrained key, with optional narrowing and, for accountability, an identity that a trace recovers.

\begin{definition}[PLC scheme]\label{def:plc}
Fix a capability $(\Op,\RefSem,\Succ)$ and a purpose lattice~$\cP$. A \emph{purpose-limited capability scheme} (PLC) consists of the following probabilistic polynomial-time (PPT) algorithms and protocols, with security parameter~$\lambda$.
\begin{itemize}
  \item $\Setup(1^\lambda)\to(\pp,\msk)$ outputs public parameters and the issuer's master key.
  \item $\Constrain\langle\Iss(\msk),\Hld\rangle(P)\to\sk_P$ is a protocol between an issuer $\Iss$ holding $\msk$ and a holder~$\Hld$, whose output $\sk_P$ may contain holder secrets the issuer never sees. A non-interactive scheme is the case where the issuer sends one message.
  \item $\Use(\pp,\sk_P,o)\to y$ applies the secret to an operation and returns a result $y$, or $\bot$. It may be a protocol between the holder and a verifier, in which case $y$ is the transcript.
\end{itemize}
A scheme may also have:
\begin{itemize}
  \item $\Attenuate(\pp,\sk_P,P')\to\sk_{P'}$ for $P'\sqsubseteq P$, run by the holder to derive a secret for a narrower purpose;
  \item $\Make(\pp,x)\to\omega$, a public algorithm creating the objects operations act on; an operation $o$ may then refer to $\Make(\pp,x)$, and $\Use$ acts on it.
\end{itemize}
The \emph{accountable variant} adds:
\begin{itemize}
  \item $\AKg(1^\lambda)\to(\id,\ask)$, account key generation registering a public identity $\id$ with secret account key~$\ask$;
  \item $\Constrain$ additionally takes $\id$ and, using $\ask$, records the issuance in an issuer-held store~$R$;
  \item $\Trace(\pp,\iid,L,R)\to\id$ or $\bot$, which, for an issued instance~$\iid$ (a scheme-defined identifier of one issued secret), recovers a per-issuance handle from the uses of $\iid$ in a log $L$ of accepted uses and resolves it to $\id$ through~$R$; we write ``$\Trace$ outputs $\id$'' for this composite.
\end{itemize}
\end{definition}

$\Make$ lets encryption fit the syntax: for functional encryption, $\Constrain$ is key generation for~$f$ and $\Make$ is encryption, so the operation ``learn $f(x)$'' acts on $\Make(\pp,x)$. Purpose may instead be bound at encryption time, the dual syntax of ciphertext-policy schemes; the token of Section~\ref{sec:token} binds purpose to the key and supports attenuation.

\paragraph{Example.}
For the payment token, $\Setup$ generates the bank's keys, $\Constrain$ is withdrawal, $\Use$ is redemption with the merchant, $\Attenuate$ appends a delegation certificate, $\AKg$ is account registration, and $\Trace$ is double-spend tracing over the bank's deposit log.

\begin{definition}[Correctness]\label{def:correctness}
For all $(\pp,\msk)$ output by $\Setup$, all $P$, all $\sk_P$ output by $\Constrain$ for~$P$, and all $o$ with $\adm(P,o)=1$,
\[
  \Succ\bigl(o,\ \Use(\pp,\sk_P,o)\bigr)=1,
\]
and $\Attenuate$ preserves this on $A(P')$.
\end{definition}

Correctness says nothing about limitation, since an all-powerful key is also correct. The three regimes capture limitation.

\subsection{Confinement}\label{sec:confinement}

\begin{figure}[t]
  \centering
  \input{figures/games-conf}
  \Description{Two framed boxes of game pseudocode. Left, Game PL-CONF-UP: the adversary gets the public parameters and a key oracle O_Constrain that records each queried purpose in Q, and wins by outputting an operation and a result that the success predicate accepts, with the operation outside the intended use of every purpose in Q. Right, Game PL-CONF-IND: after key queries the adversary names an input x and receives either the real objects for x or a simulator's output computed from the queried purposes, their keys, and the entitled leakage; it wins by guessing which.}
  \caption{The confinement games. In $\plgame{CONF-UP}$ (left) the adversary wins by achieving an off-purpose operation; in $\plgame{CONF-IND}$ (right) it must tell real objects for its input from objects simulated from its entitled leakage.}
  \label{fig:games-conf}
\end{figure}

Under confinement an off-purpose operation is computationally infeasible, and no party is trusted. ``Beyond purpose'' may mean producing a forbidden result, which fits capabilities whose output should be hard to produce (signatures, tokens, e-cash), or learning a forbidden thing, which fits capabilities whose output should stay hidden (encryption, PRFs, functional encryption). Game~\plgame{CONF-UP} covers the first case and Game~\plgame{CONF-IND} the second.

\begin{definition}[Game \plgame{CONF-UP}]\label{def:conf-up}
In Game \plgame{CONF-UP} (Figure~\ref{fig:games-conf}, left), $\cA$ adaptively obtains secrets $\sk_P$ for purposes $P$ of its choice, recorded in~$Q$, and wins if it outputs an achievement $y^*$ of an operation $o^*$ that no purpose in $Q$ intends. The scheme is \emph{unpredictability-confining} for $I$ if every PPT $\cA$ wins with probability $\negl(\lambda)$, a negligible function of~$\lambda$.
\end{definition}

The game is stated against $I$, not against~$A$. An unsound proxy is therefore insecure once its false-permit region can be found: an adversary that efficiently finds $P$ and $o\in A(P)\setminus I(P)$ obtains $\sk_P$, runs $\Use$ on~$o$, and wins. Under a sound proxy $\bigcup_{P\in Q}A(P)\subseteq\bigcup_{P\in Q}I(P)$, so confining the holder to the admissible sets confines it to the intended use. With one query ($|Q|=1$) the game is single-key confinement. The general form is \emph{collusion resistance}: a set of keys together achieves only the union of their intended sets.

\begin{figure}[t]
  \centering
  \input{figures/games-deleg-med}
  \Description{Two framed boxes of game pseudocode. Left, PL-CONF-UP with delegation: oracles grant, att, deleg, and use create a secret at an honest holder, attenuate it, hand it to the adversary (recording its purpose in Q), and run an honest use (recording the operation in T); the adversary wins with an accepted operation outside the intended use of Q and not in T. Right, Game PL-MED: oracle new creates a secret for a purpose; in held mode a monitor keeps it and runs only admissible operations, and the adversary wins with a valid result for an operation outside the intended use of every created purpose that no monitor returned; in gated mode the secret is returned, and the adversary wins if the public acceptance algorithm accepts such an operation.}
  \caption{Left: $\plgame{CONF-UP}$ with delegation; operations of honest uses are not wins. Right: $\plgame{MED}$, held and gated modes.}
  \label{fig:games-deleg-med}
\end{figure}

With $\Attenuate$, confinement must survive delegation: an adversary that watches honest holders narrow and use secrets, and receives some of them, should still achieve nothing off-purpose for what it received.

\begin{definition}[Game \plgame{CONF-UP} with delegation]\label{def:conf-up-deleg}
In this game (Figure~\ref{fig:games-deleg-med}, left) the challenger plays the issuer and all honest holders. $\oracle{grant}(P)$ creates a secret at an honest holder and returns a handle~$h$ with purpose~$P_h$; $\oracle{att}(h,P')$, for $P'\sqsubseteq P_h$, attenuates it at that holder and returns a fresh handle; $\oracle{deleg}(h)$ hands $\cA$ the secret behind $h$ and records $P_h$ in~$Q$; and $\oracle{use}(h,o)$, for $\adm(P_h,o)=1$, returns the transcript of an honest use and records~$o$ in~$\mathcal{T}$. $\cA$ wins if it outputs $(o^*,y^*)$ with $\Succ(o^*,y^*)=1$, $o^*\notin\bigcup_{P\in Q}I(P)$, and $o^*\notin\mathcal{T}$.
\end{definition}

Secrets leave honest parties only through $\oracle{deleg}$, so the coalition holds exactly the delegated secrets. The win condition excludes the operations of honest uses, not only their transcripts. A game that excluded only transcripts would count as a win a new transcript of an honest operation, for instance one carrying a re-randomized signature. Ruling that out needs strongly unforgeable signatures, and such a transcript is not an off-purpose achievement. In the payment task, $\oracle{deleg}$ hands a narrowed token to a possibly malicious assistant, and $\oracle{use}$ is an honest payment that the adversary observes.

\begin{definition}[Game \plgame{CONF-IND}]\label{def:conf-ind}
In Game \plgame{CONF-IND} (Figure~\ref{fig:games-conf}, right), the setup and key queries are as in \plgame{CONF-UP}. $\cA$ then names a challenge input $x$ (a plaintext, a PRF point) and receives either the objects the scheme produces for~$x$ (real) or the output of a simulator $\Sim$ (ideal). $\Sim$ is given the queried purposes, their real keys, the length $|x|$, and the \emph{entitled leakage} $\Leak_Q(x)$, the results on $x$ of the operations the queried purposes intend, but not $x$ itself. For functional encryption $\Sim$ gets the values $f_i(x)$ and outputs a ciphertext; for a PRF, the value at a non-admissible point is replaced by a uniform one. The scheme is \emph{indistinguishability-confining} for $I$, with a PPT simulator $\Sim$ that is part of the claim, if every PPT $\cA$ guesses the hidden bit with probability at most $\frac12+\negl(\lambda)$.
\end{definition}

\paragraph{Known primitives as instances.}
Three primitives fit the syntax with an exact proxy ($A=I$), and for each the game becomes the primitive's own security notion in a key-query form.
\begin{itemize}
  \item \emph{Functional signatures}~\cite{boyle2014functional}: a key $\sk_f$ derived from a master key signs exactly the messages in the range of~$f$. Here $\cP$ is functions~$f$, $A(f)=\range(f)$, and $\Succ(m,\sigma)=\Verify(\vk,m,\sigma)$. Game \plgame{CONF-UP} is unforgeability outside $\bigcup_i\range(f_i)$ against an adversary with key queries and no signing oracle (Proposition~\ref{prop:fs}).
  \item \emph{Constrained PRFs}~\cite{boneh2013constrained}: a key $k_C$ evaluates a pseudorandom function $F_k$ only at points with $C(x)=1$. Here $\cP$ is constraints~$C$ and $A(C)=\{x : C(x)=1\}$. Game \plgame{CONF-IND} with a uniform ideal value is pseudorandomness at non-admissible points against constrained-key queries made before the challenge. The standard constrained-PRF game also has an evaluation oracle and key queries after the challenge, so the proposition concerns this weaker notion.
  \item \emph{Functional encryption} (FE)~\cite{boneh2011fe}: a key $\sk_f$ recovers $f(x)$ from an encryption of~$x$ and nothing more. Here $\cP$ is functions~$f$, the admissible operation under $f$ is ``learn $f(x)$'' with entitled leakage $f(x)$, and $\Make$ is encryption. Game \plgame{CONF-IND} with the simulator given $\{f_i(x)\}$ is non-adaptive FE simulation security, with real setup and keys and only the challenge simulated. This is a mapping of definitions, not a proved equivalence, and it inherits FE's boundary: simulation-secure FE for general functions is impossible against unbounded collusion, even with non-adaptive queries~\cite{agrawal2013fe}, so the instance is meaningful for bounded collusion.
\end{itemize}

\begin{proposition}[Functional signatures]\label{prop:fs}
Fix a functional signature scheme $\mathsf{FS}$ and its instance $\mathsf{PLC}(\mathsf{FS})$ above. The instance is unpredictability-confining for $I$ (Game~\plgame{CONF-UP}) iff $\mathsf{FS}$ is unforgeable against key-query adversaries.
\end{proposition}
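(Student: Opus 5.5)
The plan is to show that the two games are syntactically the same experiment, so each direction of the equivalence is an identity reduction rather than a hybrid argument. First I fix the instance $\mathsf{PLC}(\mathsf{FS})$ precisely:
\begin{itemize}
  \item $\Setup$ is $\mathsf{FS}$ key generation, with $\pp=\vk$ and $\msk=\msk_{\mathsf{FS}}$;
  \item $\Constrain(f)$ is $\mathsf{FS}$ key derivation for $f$;
  \item $\Use(\pp,\sk_f,m)$ signs $m$ using a preimage witness $w$ with $f(w)=m$, as functional-signature signing does;
  \item $\Succ(m,\sigma)=\Verify(\vk,m,\sigma)$, and $I(f)=A(f)=\range(f)$.
\end{itemize}
Key-query unforgeability of $\mathsf{FS}$ is the experiment in which the adversary gets $\vk$, adaptively queries keys $\sk_{f_i}$, and wins by outputting $(m^*,\sigma^*)$ with $\Verify(\vk,m^*,\sigma^*)=1$ and $m^*\notin\bigcup_i\range(f_i)$. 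There is no signing oracle in this experiment.

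\emph{Forward direction (unforgeability $\Rightarrow$ confining).} Given a \plgame{CONF-UP} adversary $\cA$, I build a forger $\cB$ that runs $\cA$ on $\vk$ and answers each $\oracle{Constrain}(f)$ query with its own key oracle. $\cB$ then outputs whatever $\cA$ outputs. The view of $\cA$ is identical to that in \plgame{CONF-UP}, since both games sample $\vk$ and the keys by the same algorithms. It remains to match the win conditions. The set $Q$ recorded by the game is exactly the set of functions $\cB$ queried, and $\bigcup_{f\in Q}I(f)=\bigcup_i\range(f_i)$ because the proxy is exact. So $\cA$ wins precisely when $\cB$ forges, and $\Pr[\cA\text{ wins}]=\Pr[\cB\text{ forges}]$.

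\emph{Reverse direction.} The same wrapping, with the roles of the two adversaries swapped, gives a \plgame{CONF-UP} adversary from any key-query forger, again with equal success probability. Both reductions are tight and preserve PPT running time, so negligible advantage transfers in both directions.

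The only real obstacle is a definitional-alignment check rather than a computation. I must confirm three things:
\begin{enumerate}
  \item The $\mathsf{FS}$ unforgeability notion being invoked is indeed the restricted variant without a signing oracle. Boyle et al.'s full game includes one, so the proposition must be read against the key-query form named in the statement.
  \item Key queries in both games are adaptive with the same query budget.
  \item The freshness condition coincides. In \plgame{CONF-UP} it is membership outside $\bigcup I$, with no exclusion of queried signatures, because there are none.
\end{enumerate}
If $\mathsf{FS}$ key derivation is randomized or stateful, I note that the challenger in both games runs the identical algorithm, so the distributions still match. Finally, I remark that exactness $A=I$ is used only in the equality of the two win sets. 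For a merely sound proxy only the forward implication would survive in this form.
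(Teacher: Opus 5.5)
Your proposal is correct and follows the paper's proof. In both directions the reduction is the identity translation: set $\pp:=\vk$, forward each key query, and make no signing query; because the proxy is exact ($I(f)=\range(f)$), the two win conditions coincide and the advantages are equal. Your side remarks are accurate and go slightly beyond what the paper states. The notion must be read in its key-query form, without Boyle et al.'s signing oracle. And with a merely sound proxy, only the direction from unforgeability to confinement survives.
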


\begin{proposition}[Constrained PRFs]\label{prop:cprf}
For a constrained PRF~$F$, the instance $\mathsf{PLC}(F)$ above is indistinguishability-confining for $I$ (Game~\plgame{CONF-IND}, one challenge point, with the simulator that outputs the entitled leakage if it is nonempty and a uniform value otherwise) iff $F$ is pseudorandom at non-admissible points against constrained-key queries made before the challenge, with no evaluation oracle.
\end{proposition}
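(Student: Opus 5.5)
The proof is a direct game-to-game identification in both directions, for the instance $\mathsf{PLC}(F)$ with $\cP$ the constraints $C$, $A(C)=\{x : C(x)=1\}$, exact proxy $I=A$, and $\Use(\pp,k_C,x)=F_k(x)$. First I will fix the instance so that the two games can be compared. $\Setup$ samples the master key $k$. $\Constrain(C)$ returns $k_C$. The challenge objects for a point $x$ are the single value $F_k(x)$. The entitled leakage is $\Leak_Q(x)=\{F_k(x)\}$ if $x\in\bigcup_{C\in Q}A(C)$ and $\emptyset$ otherwise. The simulator in the statement therefore outputs $F_k(x)$ on admissible challenges and a uniform value on non-admissible ones.

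The central observation is that on an admissible challenge the real and ideal outputs are identical. Both are $F_k(x)$, which is exactly what a queried key $k_C$ with $C(x)=1$ computes. So the adversary's advantage in \plgame{CONF-IND} comes only from runs in which the challenge point is non-admissible for every queried constraint. On those runs the game is syntactically the constrained-PRF game restricted as in the statement: key queries before the challenge, no evaluation oracle, one challenge point outside all queried constraints, and real $F_k(x)$ against uniform.

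For the forward direction ($F$ secure $\Rightarrow$ confining), take an adversary $\cA$ against \plgame{CONF-IND}. I build a reduction $\cB$ that forwards each of $\cA$'s key queries to its own constrained-key oracle and receives $\cA$'s challenge $x$.
\begin{itemize}
  \item If $x$ is non-admissible, $\cB$ submits $x$ as its own challenge and relays the answer.
  \item If $x$ is admissible, $\cB$ computes $F_k(x)$ from a key $k_C$ with $C(x)=1$ and hands it to $\cA$. It then outputs a random bit, or simply aborts with a random guess.
\end{itemize}
Since the admissible branch contributes zero advantage to $\cA$, the advantages are equal. For the converse, a constrained-PRF adversary is already a \plgame{CONF-IND} adversary whose challenge is non-admissible. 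It wins with the same probability, because there the simulator's uniform value matches the PRF game's ideal branch.

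The main obstacle is bookkeeping about which set of admissible points counts. The constrained-PRF game requires the challenge to avoid every queried constraint, while \plgame{CONF-IND} allows any challenge. I have to check that admissibility is decided with respect to all of $Q$, which is fixed before the challenge since no later key queries are allowed, and not with respect to a single key. I also have to check that $\Sim$ may use the real keys to compute the leakage. If instead $\Leak_Q$ is supplied by the challenger, the simulator needs no key at all, and the equivalence still holds because the challenge query can be answered from the leakage alone. I also need the convention that a challenge adversary making an out-of-bounds choice, for instance in the constrained-PRF game, is treated as losing, so that the advantages match exactly.
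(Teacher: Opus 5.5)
Your proposal is correct and matches the paper's proof. The two games coincide after renaming. On a challenge admissible for some queried constraint, the real and simulated outputs are both $F_k(x)$, so that branch is a fair coin. On a non-admissible challenge, \plgame{CONF-IND} is exactly the constrained-PRF challenge, with all key queries made beforehand.

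One small bookkeeping correction: scoring an out-of-bounds constrained-PRF challenge as a loss does not make the win probabilities match exactly. In the converse it gives only that the constrained-PRF winning probability is at most the \plgame{CONF-IND} one. That inequality is enough for the one-sided bound of Definition~\ref{def:conf-ind}. Exact equality needs the usual restriction of constrained-PRF adversaries to valid challenges, or scoring an invalid challenge as a coin flip.
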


Both proofs translate one game into the other without loss (Appendix~\ref{app:propositions}).

\subsection{Policy-Based Signatures as an Instance}\label{sec:signing}

The build key of Section~\ref{sec:intro} is an instance of Proposition~\ref{prop:fs}, not a new construction, and it shows what stating the game against~$I$ adds. The capability is signing: $\Op$ is messages and $\Succ(m,\sigma)=\Verify(\vk,m,\sigma)$. Purposes are predicates $P$ over messages (``a well-formed project-$X$ artifact''), with the exact proxy $A(P)=I(P)=\{m : P(m)=1\}$. A policy-based signature~\cite{bellare2014policy} is then a PLC: $\Constrain(\msk,P)$ returns a key that signs exactly the messages satisfying~$P$, and $\Use$ signs. Policy-based signatures have no delegation algorithm, and we construct no $\Attenuate$.

\begin{corollary}[Signing-instance confinement]\label{cor:signing}
If the underlying policy-based signature is unforgeable against key-query adversaries, this PLC is unpredictability-confining for $I$ (Game~\plgame{CONF-UP}).
\end{corollary}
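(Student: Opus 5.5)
The plan is to reduce directly to the forward direction of Proposition~\ref{prop:fs}, or equivalently to give the one-line reduction behind it. The first step is to check that the signing instance is a legitimate PLC with an exact proxy. The capability is $\Op=$ messages with public $\Succ(m,\sigma)=\Verify(\vk,m,\sigma)$. $\Setup$ is the policy-based signature setup, $\Constrain(\msk,P)$ is its key derivation, and $\Use(\pp,\sk_P,m)$ is its signing algorithm, which produces a verifying signature whenever $P(m)=1$. Correctness (Definition~\ref{def:correctness}) is then exactly PBS correctness, and $A(P)=I(P)=\{m:P(m)=1\}$ by construction, so the proxy is sound. The step I expect to need the most care is matching the win condition. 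Since the proxy is exact, $\bigcup_{P\in Q}I(P)=\{m:\exists P\in Q,\ P(m)=1\}$, which is precisely the set of messages excluded in key-query PBS unforgeability. Here I take that notion to be the adversary adaptively obtaining keys for policies of its choice, with no signing oracle, and winning with a valid $(m^*,\sigma^*)$ that satisfies none of the queried policies. If the paper's notion of PBS unforgeability instead includes a signing oracle or witness-based policies (as in the original $(c,m)$ formulation), I would note that the key-query form is the stated hypothesis, so nothing more is needed.

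The reduction itself goes as follows. Given a PPT $\cA$ in \plgame{CONF-UP} against the PLC, build $\cB$ against PBS. $\cB$ forwards $\pp$ to $\cA$, and answers each $\oracle{Constrain}(P)$ by querying its own key oracle on $P$ and returning the key, so both games record the same set $Q$. When $\cA$ outputs $(o^*,y^*)$, $\cB$ outputs $(m^*,\sigma^*)=(o^*,y^*)$. The simulation is perfect. $\cA$ wins iff $\Verify(\vk,o^*,y^*)=1$ and $o^*\notin\bigcup_{P\in Q}I(P)$, and this holds iff $\cB$ wins. The advantages are therefore equal, hence negligible. This is the ``if'' direction of Proposition~\ref{prop:fs} specialized to the case where a policy $P$ plays the role of $f$ with $\range(f)=\{m:P(m)=1\}$. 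I would also remark that, since there is no $\Attenuate$, only the basic game (not Definition~\ref{def:conf-up-deleg}) is claimed.
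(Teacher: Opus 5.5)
Your proposal is correct and follows the paper's proof. The paper reuses the forward reduction of Proposition~\ref{prop:fs} with $\{m : P(m)=1\}$ in place of $\range(f)$: it forwards key queries and outputs $\cA$'s $(m^*,\sigma^*)$ as a key-query forgery, exactly as you do, and your side remarks (exact proxy, no signing oracle, only the basic game since there is no $\Attenuate$) agree with the paper's setup.
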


\begin{proof}
As for Proposition~\ref{prop:fs} with $\{m : P(m)=1\}$ in place of $\range(f)$: a \plgame{CONF-UP} winner outputs $(m^*,\sigma^*)$ with $\Verify(\vk,m^*,\sigma^*)=1$ and $P_i(m^*)=0$ for every queried~$P_i$, a key-query forgery.
\qed
\end{proof}

A functional signature realizes the same instance when $\{m : P(m)=1\}$ has an efficient encoding as the range of a function. The exact proxy $A(P)=I(P)$ is an assumption about the predicate, and it can fail. Suppose the policy language can express only a coarser predicate $\tilde P$ that checks that an artifact's name starts with the project prefix. A malicious build with that name is in $A(\tilde P)$ but not in $I(P)$, the false-permit region of Figure~\ref{fig:fidelity}. The policy-based signature is still unforgeable for~$\tilde P$, but the PLC is not confining for~$I$. Game~\plgame{CONF-UP}, stated against $I$, records the failure; the primitive's own game, stated against the predicate, does not.

\subsection{Mediation and Accountability}\label{sec:mediation-accountability}

\paragraph{Mediation.}
Under mediation a trusted component, not hardness, keeps use in purpose. The component may be a hardware security module (HSM) that stores keys, a trusted execution environment (TEE) that runs code in an attested enclave, or a verifier that checks results. The game models it as an oracle, and that interface is the integrity assumption an instantiation must justify.

\begin{definition}[Game \plgame{MED}]\label{def:med}
In Game \plgame{MED} (Figure~\ref{fig:games-deleg-med}, right), $\cA$ controls the host and creates monitors $M_i$ through $\algo{new}(P_i)$ for purposes of its choice, in one of two modes.
\begin{itemize}
  \item \emph{Monitor-held}: the secret $\sk_{P_i}$ never leaves~$M_i$, which runs $\Use$ only on operations admissible for~$P_i$. $\cA$ wins if it outputs $(o^*,y^*)$ with $\Succ(o^*,y^*)=1$, $o^*\notin\bigcup_i I(P_i)$, and $y^*$ not returned by any monitor.
  \item \emph{Verifier-gated}: the secret is a bearer object (usable by whoever holds it) given to~$\cA$, and the monitor is an honest verifier running a public acceptance algorithm $\Acc(\pp,o,y)$ that the scheme specifies. $\Acc$ reads the purpose from~$y$ (a certificate, a chain), checks that it admits~$o$, accepts honest uses of admissible operations, and accepts only if $\Succ(o,y)=1$. $\cA$ wins if the verifier accepts an off-purpose operation, $\Acc(\pp,o^*,y^*)=1$ with $o^*\notin\bigcup_i I(P_i)$.
\end{itemize}
The scheme is \emph{mediation-secure} if $\Adv{med}(\cA):=\Pr[\cA\text{ wins}]$ is negligible for all PPT~$\cA$.
\end{definition}

Monitor-held mediation turns ``the monitor honors a flag'' into ``the secret is unusable outside an attested boundary'', and an instantiation must justify that the secret stays inside (attestation, sealing). Verifier-gated mediation, which the token uses at redemption, needs only an honest verifier. The deployed purpose flags of Section~\ref{sec:intro} are mediation: a PKCS\#11 token is a monitor that holds the key, and an X.509 relying party that checks the key-usage field is verifier-gated, as is a server that checks OAuth access tokens or JSON Web Tokens (JWTs)~\cite{rfc5280,pkcs11v3,rfc6749,rfc7519,rfc8707,rfc9449}. Neither flag confines the key: once the check is bypassed, the key achieves off-purpose operations and wins \plgame{CONF-UP}~\cite{bortolozzo2010pkcs11}.

\begin{remark}[Verifier-gated mediation and confinement]\label{rem:med-conf}
In the verifier-gated mode $\cA$ holds the secrets of the purposes passed to $\algo{new}$, and an accepted result is valid. A gated win is therefore a \plgame{CONF-UP} win with $Q$ those purposes, and unpredictability confinement for $I$ implies verifier-gated mediation security for~$I$. The converse fails: an X.509 key whose certificate the adversary cannot forge is gated-secure, since the relying party reads the key usage from the signed certificate, yet the key signs off-purpose messages and is not confining. The gated mode asks only that a forbidden result not be accepted, and assumes that the verifier runs $\Acc$ before honoring a result.
\end{remark}

\paragraph{Accountability.}
Under accountability, use beyond a stated bound is possible but attributable: issuance binds an identity, and $\Trace$ resolves a log of accepted uses to it. The regime governs how often a secret is used, not which operations. Whether each use lies in $A(P)$ is left to confinement and mediation, so here each accepted use is assumed in-purpose; an extra use is permitted and attributed, not prevented.

\begin{definition}[Traceable limitation and non-frameability]\label{def:trace}
Extend the intended use with a \emph{use bound}: purpose $P$ intends at most $k_P$ uses of a secret. The game is the delegation-extended game for the accountable variant, with these changes.
\begin{itemize}
  \item Honest accounts are created with $\AKg$, and $\oracle{grant}$ binds the issued secret to an account's identity~$\id$.
  \item $L$ is the issuer's log of accepted uses and $R$ its record store when $\cA$ halts. Each use is logged once, in a form the scheme fixes, so a resubmitted use is a replay and does not count. A corrupt verifier lets $\cA$ submit uses, which enter $L$ only if they pass the scheme's public checks.
  \item For non-frameability the issuer may be corrupt; then $\cA$ writes $R$, but honest accounts keep their account keys.
\end{itemize}
A secret \emph{over-spends} if $L$ contains more than $k_P$ uses of it.
\begin{itemize}
  \item \emph{Traceable limitation}: if a secret $\iid$ bound to $\id$ over-spends, whoever spent it, then $\Trace(\pp,\iid,L,R)=\id$ except with negligible probability.
  \item \emph{Non-frameability}: except with negligible probability, there is no $\iid$ for which $\Trace(\pp,\iid,L,R)$ names an honest identity to which $\iid$ was never issued, or one to which it was issued that neither shared nor over-spent it.
\end{itemize}
\end{definition}

Non-frameability is per secret: an identity that delegates one secret is still protected on the others. Attribution follows possession, so a delegated secret that over-spends traces to the identity it was issued to, however often it was delegated. For a one-use token ($k_P=1$), traceable limitation is ``a double-spend identifies the withdrawing account''~\cite{chaum1988ecash,brands1993cash}, and non-frameability is e-cash exculpability. Traitor tracing is the many-user analogue~\cite{chor1994tracing}.

\paragraph{Relations among the regimes.}
Table~\ref{tab:regimes} in Section~\ref{sec:discussion} summarizes the regimes, and Remark~\ref{rem:med-conf} relates confinement to gated mediation. A purpose has a \emph{content} part (which operations) and may have a \emph{count} part (how many times). Confinement and mediation can enforce the content part. A stateless check cannot enforce the count part, because a verifier without state cannot tell a second presentation of a bearer secret from the first; the count part needs a stateful monitor or accountability over a log.

\section{A Purpose-Limited Payment Token}\label{sec:token}

The token realizes the invoice task and uses all three regimes on one secret. It is an off-line e-cash coin (the bank is not contacted at payment time) whose bank signature also covers a maximal purpose $P_0$ and a root public key. A holder narrows the purpose by appending certificates to a chain that starts at that key. At redemption the merchant checks the chain, checks that the payment is admissible for the chain's last purpose, and checks an e-cash opening that identifies the withdrawing account if the coin is spent twice. The building blocks are known: the representation coin of Brands~\cite{brands1993cash}, e-cash double-spend equations~\cite{chaum1988ecash}, and public-key delegation in the tradition of SPKI/SDSI (the Simple Public Key Infrastructure and Simple Distributed Security Infrastructure)~\cite{rfc2693}.

\subsection{Capability and Purpose Space}\label{sec:token-purposes}

Operations $\redeem(a,M',t)$ and purposes $P=(M,C,T)$ are those of the example in Section~\ref{sec:purposes}, with use bound $k_P=1$, which accountability enforces over the deposit log. Merchant identifiers form a tree: a merchant, its outlets as children, and a wildcard~$*$ at the root. Write $M'\preceq M$ if $M'$ is $M$ or a descendant of~$M$. Then
\[
  \adm\bigl((M,C,T),\redeem(a,M',t)\bigr)=[\,M'\preceq M\wedge a\le C\wedge t\in T\,].
\]
Caps range over $[0,\infty]$, and windows $T=[t_0,t_1]$ are ordered by inclusion. Refinement is componentwise, $(M',C',T')\sqsubseteq(M,C,T)$ iff $M'\preceq M$, $C'\le C$, and $T'\subseteq T$, and makes purposes a bounded lattice whose meet and join are componentwise: minimum and maximum for caps, intersection and interval hull for windows, and for merchants the deeper of two comparable nodes, or an empty merchant if they are incomparable, and the nearest common ancestor. $\top=(*,\infty,(-\infty,\infty))$ admits every operation, and all purposes with an empty merchant or an empty window are identified with~$\bot$, so $A(\bot)=\emptyset$; the join with~$\bot$ is $\bot\sqcup P=P$ rather than componentwise. Each conjunct of $\adm$ is downward closed in its order, so $P'\sqsubseteq P$ implies $A(P')\subseteq A(P)$, and a verifier decides $P_i\sqsubseteq P_{i-1}$ by three efficient tests. The proxy is the purpose predicate itself, so $A=I$ and the confinement claims need no soundness hypothesis.

\subsection{Building Blocks}\label{sec:token-blocks}

\paragraph{Group and hash.}
Let $\G$ be a group of prime order $q$ with generators $g_1,g_2$ such that $w_2:=\log_{g_1}g_2$ is unknown to all parties, for instance with $g_2$ derived from a hash of a public string; the hardness assumption is discrete log (DL) in~$\G$. A \emph{representation} of $X\in\G$ is a pair $(r_1,r_2)$ with $X=g_1^{r_1}g_2^{r_2}$, and two representations of one element reveal $w_2$, the basis of Brands' e-cash~\cite{brands1993cash}. A hash function $H:\{0,1\}^*\to\Zq$ is modeled as a random oracle~\cite{bellare1993random} in Theorems~\ref{thm:trace} and~\ref{thm:nonframe} and assumed only collision-resistant in Theorem~\ref{thm:atten}.

\paragraph{Signatures and proofs.}
The bank signs with an EUF-CMA (existentially unforgeable under chosen-message attack) signature $(\Sign,\Verify)$~\cite{goldwasser1988digital} and key pair $(\bsk,\bvk)$. Delegation certificates and account records use an EUF-CMA signature $\Sdel=(\Kg,\Sg,\Vf)$; Schnorr signatures over~$\G$ are a natural choice, but their unforgeability is proved in the random-oracle model~\cite{pointcheval2000security}. At issuance the holder proves knowledge of the coin identity $u$ and the coin secret~$s$, written $\pi\gets\mathrm{PoK}\{(u,s):Y=g_1^u\wedge A=(Yg_2)^s\}$: a Schnorr proof~\cite{schnorr1991efficient} for the conjunction, a three-move $\Sigma$-protocol with one challenge, made non-interactive by the Fiat--Shamir transform~\cite{fiat1986prove} over~$H$, whose input contains the full statement. In the random-oracle model the proof is zero-knowledge (a simulator programs~$H$) and extractable (rewinding to the query that fixed the challenge yields the witness)~\cite{pointcheval2000security}. Every signed message and hash input is canonically encoded and begins with a tag for its role (bank signature, certificate, spend authorization, withdrawal record, proof challenge, redemption challenge), so no signature or hash value serves two roles.

\paragraph{Design choices.}
The coin uses two generators and a per-coin pad commitment~$D$, the first message of a $\Sigma$-protocol, as in Brands' scheme and compact e-cash~\cite{brands1993cash,camenisch2005compact}. A one-generator coin with a Schnorr nonce gives a trace equation but no simulatable one-spend transcript, since its nonce commitment is certified before the challenge is known; the pad commitment lets the non-frameability reduction choose the opening first and solve for~$D$. Attenuation uses a \emph{public-key delegation chain}~\cite{rfc2693} rather than macaroons, which chain message authentication codes (MACs) under a root key that every verifier, here every merchant, would hold and could use to mint looser caveats~\cite{birgisson2014macaroons}.

\subsection{The Scheme}\label{sec:token-scheme}

\begin{figure}[t]
  \centering
  \small
\begin{tabular}{@{}p{4.7cm}@{\hspace{1mm}}c@{\hspace{1mm}}p{3.9cm}@{}}
  \textbf{Holder} $U$ (holds $\ask_U$) & & \textbf{Bank} (holds $\bsk$, store $R$)\\
  \midrule
  $u,s,a_1,a_2\gets\Zq$;\ $(\sk_0,\vk_0)\gets\Kg$ & & \\
  $Y:=g_1^u$;\ $A:=(Yg_2)^s$ & & \\
  $D:=g_1^{a_1}g_2^{a_2}$ & & \\
  $\pi\gets\mathrm{PoK}\{(u,s):$ & & \\
  \quad$Y=g_1^u\wedge A=(Yg_2)^s\}$ & & \\
  $\rho_w:=\Sg_{\ask_U}(Y,A,D)$ & & \\
  & \sendright{$\avk_U,\ P_0,\ Y,\ A,\ D$\\$\vk_0,\ \pi,\ \rho_w$} & \\
  & & check $\pi$, and $\rho_w$ under $\avk_U$\\
  & & refuse if $A=1$ or $(A,D)$ in $R$\\
  & & store $(\avk_U,Y,A,D,\rho_w)$ in $R$\\
  & & $\sigma:=\Sign_{\bsk}(A,D,P_0,\vk_0)$\\
  & \sendleft{$\sigma$} & \\
  \multicolumn{3}{@{}l}{$\sk_P:=(u,s,a_1,a_2,\sk_0,\vk_0,A,D,\sigma,P_0)$}\\
\end{tabular}

  \Description{Protocol diagram of issuance between the holder U and the bank. The holder samples per-coin secrets, forms Y, A, D, a proof of knowledge pi and the withdrawal record rho_w, and sends them with avk_U, P_0 and vk_0 to the bank; the bank checks the proof and the record, refuses A=1 or a repeated (A,D), stores the record, and returns the signature sigma.}
  \caption{Issuance ($\Constrain$) for maximal purpose $P_0$. The account-signed record $\rho_w$ is what tracing resolves to an account.}
  \label{fig:token-protocols}
\end{figure}

\begin{figure}[t]
  \centering
  \small
\begin{tabular}{@{}p{4.7cm}@{\hspace{1mm}}c@{\hspace{1mm}}p{3.9cm}@{}}
  \textbf{Spender} (holds $\ch$, $\sk_n$, $u,s,a_1,a_2$) & & \textbf{Merchant} $M'$ (holds $\bvk$)\\
  \midrule
  & & pick a nonce $N$\\
  & \sendleft{$N$} & \\
  $N_U\gets\{0,1\}^\lambda$ & & \\
  $d:=H(A,D,\vk_n,M',a,t,N,N_U)$ & & \\
  $r_1:=usd+a_1$;\ $r_2:=sd+a_2$ & & \\
  $\rho:=\Sg_{\sk_n}(d)$ & & \\
  & \sendright{$A,\ D,\ \ch,\ N_U,\ a,\ t$\\$r_1,\ r_2,\ \rho$} & \\
  & & recompute $d$\\
  & & accept iff (V1)--(V5) hold\\
  & & bank re-checks,\newline logs $(A,D,d,r_1,r_2)$\\
\end{tabular}

  \Description{Protocol diagram of redemption between the spender and the merchant. The merchant sends a nonce N; the spender picks N_U, computes the challenge d, the opening r_1, r_2 and the spend authorization rho, and sends them with A, D, the chain, N_U, a and t; the merchant recomputes d, checks (V1)-(V5), and the bank re-checks and logs the redemption.}
  \caption{Redemption ($\Use$) of $\redeem(a,M',t)$ with a chain ending at $(P_n,\vk_n)$; $d$ binds the operation.}
  \label{fig:token-redemption}
\end{figure}

\paragraph{Issuance.}
Holder $U$ registers, under the bank's know-your-customer (KYC) checks, a verification key $\avk_U$ and keeps the signing key~$\ask_U$; there is no long-term discrete-log identity. To withdraw a coin for a maximal purpose $P_0$ (Figure~\ref{fig:token-protocols}), $U$ samples a fresh \emph{coin identity} $u$ with pseudonym $Y=g_1^u$, a coin secret~$s$, pads $a_1,a_2$, and a root delegation key pair $(\sk_0,\vk_0)$, all independent per coin. It forms the \emph{coin} $A=(Yg_2)^s=g_1^{us}g_2^s$ and the \emph{pad commitment} $D=g_1^{a_1}g_2^{a_2}$, proves knowledge of $(u,s)$, and signs the \emph{withdrawal record} $\rho_w$ on $(Y,A,D)$. The bank checks both, stores the record, and signs $\sigma$ on $(A,D,P_0,\vk_0)$; the holder keeps $\sk_P=(u,s,a_1,a_2,\sk_0,\vk_0,A,D,\sigma,P_0)$, and the empty chain authorizes spending at~$P_0$. One withdrawal creates one \emph{instance} $(A,D)$, and the secret $\sk_P$ is everything the holder keeps for it. Each check at issuance serves tracing. The proof's first conjunct keeps $Y$ free of a $g_2$-component, without which a double-spend could trace to a value matching no pseudonym; the second ties $A$ to~$u$. The bank refuses $A=1$, which forces $s\neq0$, as tracing needs: $s=0$ gives $A=1$, and $A=1$ with $s\neq0$ would need $u=-w_2$. It refuses an instance $(A,D)$ already in~$R$, so each instance has one record, but not a repeated~$A$ with a new~$D$, since tracing operates on the instance $(A,D)$ and each withdrawal is debited separately.

\paragraph{Attenuation.}
A holder of the key $\sk_{\mathit{cur}}$ at the end of a chain for purpose $P_{\mathit{cur}}$ narrows it to $P'\sqsubseteq P_{\mathit{cur}}$ by drawing $(\sk',\vk')\gets\Kg$ and signing the certificate $\delta:=\Sg_{\sk_{\mathit{cur}}}(\langle P',\vk'\rangle)$. The authorization for $P'$ is the chain
\[
  \ch=\bigl((P_0,\vk_0,\sigma),(P_1,\vk_1,\delta_1),\ldots,(P_n,\vk_n,\delta_n)\bigr),
  \qquad P_0\sqsupseteq P_1\sqsupseteq\cdots\sqsupseteq P_n=P',
\]
handed to the delegate with the terminal key $\sk_n$ and the coin secrets $(u,s,a_1,a_2)$. In the running example the account withdraws at $P_0=(M,50,T)$ and hands the coin to the assistant, which gives its sub-contractor one certificate for $P_1=(M,20,T)$ and the key~$\sk_1$.

\paragraph{Redemption.}
To redeem $o=\redeem(a,M',t)$ with a chain ending at $(P_n,\vk_n)$ (Figure~\ref{fig:token-redemption}), the spender answers the merchant's nonce $N$ with its own nonce $N_U$, sets the challenge $d:=H(A,D,\vk_n,M',a,t,N,N_U)$, and sends the opening $r_1:=usd+a_1$, $r_2:=sd+a_2 \pmod q$ and the spend authorization $\rho:=\Sg_{\sk_n}(d)$ together with $(A,D,\ch,N_U,a,t)$. The merchant recomputes $d$ and accepts iff
\begin{itemize}
  \item[(V1)] $\Verify_{\bvk}((A,D,P_0,\vk_0),\sigma)=1$;
  \item[(V2)] for $i=1,\ldots,n$: $\Vf_{\vk_{i-1}}(\langle P_i,\vk_i\rangle,\delta_i)=1$ and $P_i\sqsubseteq P_{i-1}$;
  \item[(V3)] $\adm(P_n,o)=1$;
  \item[(V4)] $\Vf_{\vk_n}(d,\rho)=1$;
  \item[(V5)] $g_1^{r_1}g_2^{r_2}=A^d\cdot D$.
\end{itemize}
On deposit the bank re-runs (V1)--(V5) on the transcript, so a corrupt merchant cannot log a tuple that fails them, adds $(A,D,d,r_1,r_2)$ to its deposit log~$L$, and discards a tuple already present as a replay. Checks (V1)--(V2) authenticate the chain, (V3) enforces its terminal purpose, (V4) shows that the spender holds the terminal key, so it cannot present a shorter chain ending at a wider purpose, and (V5) is the traceable opening, which an honest spender satisfies since $A^dD=g_1^{usd+a_1}g_2^{sd+a_2}$.

The merchant identity $M'$ enters $d$ and is compared with $P_n$ in (V3). That the party running the checks is the merchant $M'$ it claims to be, for example because the spender reaches it over a channel authenticated to~$M'$, is part of the trust placed in the verifier and so belongs to mediation. Theorem~\ref{thm:atten} bounds which operations can be accepted; the verifier's honesty and authenticity ensure that an accepted operation is the one that took place.

\paragraph{Identity binding and delegation.}
Tracing follows a chain of bindings from two openings to $u$ (the issuance proof ties $A$ to~$u$), to $Y$, and to the account (the record $\rho_w$ binds $Y$ and $(A,D)$ to it). The coin secrets belong to one coin, so attenuation hands over one single-use instance, narrowed in purpose, and nothing long-term: a delegate learns that coin's~$u$, since an opening needs both $us$ and~$s$, but nothing about the account's other coins or its signing key.

\paragraph{Tracing, $\Trace(\pp,(A,D),L,R)$.}
If $L$ contains two tuples for the instance $(A,D)$ with $r_2\neq r_2'$, $\Trace$ computes $u:=(r_1-r_1')(r_2-r_2')^{-1}\bmod q$ and $Y:=g_1^u$, and otherwise outputs~$\bot$; this is correct because $r_1-r_1'=us(d-d')$ and $r_2-r_2'=s(d-d')$. It outputs the account of the record $(\avk_U,Y,A,D,\rho_w)\in R$ if it is the only record for $(A,D)$, carries~$Y$, and verifies, $\Vf_{\avk_U}((Y,A,D),\rho_w)=1$, and $\bot$ otherwise. The signature check matters when the bank is corrupt and writes~$R$.

\subsection{Security}\label{sec:token-security}

\begin{figure}[t]
  \centering
  \begin{tikzpicture}[
    font=\small,
    node distance=4mm,
    phase/.style={draw, rounded corners=2pt, fill=gray!10, text width=6.7cm,
                  align=left, inner sep=4pt},
    regime/.style={draw, rounded corners=2pt, fill=gray!25, minimum width=2.5cm,
                   minimum height=5mm, font=\small\itshape},
    flow/.style={-{Stealth[length=2mm]}, thick},
  ]
  \node[phase] (issue) {\textbf{1. Issuance} (bank, holder).
    Holder forms coin $A=(Yg_2)^s$, pad~$D$, root key~$\vk_0$; bank signs
    $\sigma=\Sign(A,D,P_0,\vk_0)$ and stores the account-signed withdrawal record.};
  \node[phase, below=of issue] (att) {\textbf{2. Attenuation} (delegation chain).
    Certificate $\delta_i=\Sg_{\sk_{i-1}}\langle P_i,\vk_i\rangle$ narrows
    $P_0\sqsupseteq P_1\sqsupseteq\cdots\sqsupseteq P_n$; a bearer cannot widen.};
  \node[phase, below=of att] (red) {\textbf{3. Redemption} (spender, merchant).
    Send $(A,D,\ch,N_U,a,t,r_1,r_2,\rho)$; the merchant checks (V1)--(V5).};
  \node[phase, below=of red] (reuse) {\textbf{Reuse.}
    Two redemptions of one instance $(A,D)$ give $\Trace$:
    $u=(r_1-r_1')(r_2-r_2')^{-1}$, the coin identity, traced to the account.};

  \draw[flow] (issue) -- (att);
  \draw[flow] (att)   -- (red);
  \draw[flow] (red)   -- (reuse);

  \node[regime, right=7mm of att]   (conf) {Confinement};
  \node[regime, right=7mm of red]   (med)  {Mediation};
  \node[regime, right=7mm of reuse] (acc)  {Accountability};

  \draw[flow] (att)   -- (conf);
  \draw[flow] (red)   -- (med);
  \draw[flow] (reuse) -- (acc);
\end{tikzpicture}
  \Description{Flow diagram of four stacked boxes connected by downward arrows: 1. Issuance (bank, holder), 2. Attenuation (delegation chain), 3. Redemption (spender, merchant), and Reuse. Arrows lead from Attenuation to a box labelled Confinement, from Redemption to a box labelled Mediation, and from Reuse to a box labelled Accountability.}
  \caption{Phases of the token and the result that covers each. Attenuation is confinement (Theorem~\ref{thm:atten}), redemption is verifier-gated mediation (Corollary~\ref{cor:med}), and reuse is held to account (Theorems~\ref{thm:trace} and~\ref{thm:nonframe}).}
  \label{fig:token}
\end{figure}

\begin{table}[t]
  \caption{Security results for the token, each proved separately. CR: collision resistance; ROM: random-oracle model. ``Standard'' describes the reduction; with Schnorr certificates, whose unforgeability is proved in the ROM, Theorem~\ref{thm:atten} also rests on the ROM.}
  \label{tab:token-results}
  \centering\small\setlength\tabcolsep{4pt}
  \begin{tabular}{@{}llll@{}}
    \toprule
    Result & Regime & Assumptions & Model \\
    \midrule
    Thm.~\ref{thm:atten} attenuation & confinement & EUF-CMA ($\Sdel$, bank), CR of $H$ & standard \\
    Thm.~\ref{thm:trace} traceability & accountability & DL & ROM \\
    Thm.~\ref{thm:nonframe} non-frameability & accountability & DL, EUF-CMA (accounts) & ROM \\
    Cor.~\ref{cor:med} mediation & mediation, gated & as Thm.~\ref{thm:atten}; honest verifier & standard \\
    Thm.~\ref{thm:unlink} unlinkability & privacy (variant) & well-formed bank key & perfect \\
    \bottomrule
  \end{tabular}
\end{table}

Each regime has its own result, proved separately on the same token (Figure~\ref{fig:token}). Theorem~\ref{thm:atten} (confinement) says that narrowing cannot be undone: a coalition achieves nothing outside the purposes delegated to it, apart from operations that honest holders performed. Theorems~\ref{thm:trace} and~\ref{thm:nonframe} (accountability) say that a double-spend names the withdrawing account and that an honest account is not named for an instance it never withdrew, or withdrew but neither delegated nor spent twice. Corollary~\ref{cor:med} gives verifier-gated mediation, and Table~\ref{tab:token-results} lists the assumptions. There is no joint security statement (Section~\ref{sec:discussion}). For confinement, an off-purpose operation is an accepted redemption outside the purposes delegated to the adversary; for accountability, the target is reuse, with use bound~1.

\begin{lemma}[Representation binding]\label{lem:rep}
If a PPT algorithm, given $(g_1,g_2)$, outputs $X\in\G$ and two pairs $(r_1,r_2)\neq(\tilde r_1,\tilde r_2)$ with $g_1^{r_1}g_2^{r_2}=X=g_1^{\tilde r_1}g_2^{\tilde r_2}$, then $w_2=(r_1-\tilde r_1)(\tilde r_2-r_2)^{-1}\bmod q$, so it solves discrete log.
\end{lemma}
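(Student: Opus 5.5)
The plan is a direct algebraic extraction followed by the standard reduction to discrete log. Starting from $g_1^{r_1}g_2^{r_2}=g_1^{\tilde r_1}g_2^{\tilde r_2}$, I would divide one side by the other to get $g_1^{r_1-\tilde r_1}=g_2^{\tilde r_2-r_2}$. Substituting $g_2=g_1^{w_2}$ then gives $g_1^{r_1-\tilde r_1}=g_1^{w_2(\tilde r_2-r_2)}$. Because $\G$ has prime order $q$ and $g_1$ is a generator, exponents are determined modulo $q$, so
\[
  r_1-\tilde r_1\equiv w_2(\tilde r_2-r_2)\pmod q .
\]

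The one step needing care is showing that $\tilde r_2-r_2$ is invertible modulo $q$. I would argue by contradiction. Suppose $r_2=\tilde r_2$. Then the congruence forces $r_1\equiv\tilde r_1$, so the two pairs coincide modulo $q$. This contradicts the hypothesis $(r_1,r_2)\neq(\tilde r_1,\tilde r_2)$, which I read as a statement about elements of $\Zq^2$. Hence $\tilde r_2-r_2$ is invertible, and dividing through yields $w_2=(r_1-\tilde r_1)(\tilde r_2-r_2)^{-1}\bmod q$.

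For the reduction, let $\cB$ be a discrete-log solver that receives an instance $(g_1,h)$ and sets $g_2:=h$. This is distributed exactly as the setup of the statement, since $g_2$ is a generator whose logarithm is unknown. $\cB$ runs the given algorithm on $(g_1,g_2)$. It checks the two claimed representations, which costs two multi-exponentiations, and then outputs the formula above. This is PPT, and its success probability equals that of the representation-finding algorithm.

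The argument has no real obstacle. The only points to be careful about are that the pairs are taken modulo $q$, and that the distribution of the generator $g_2$ matches that of a DL challenge. If $g_2$ is instead derived from a hash, I would remark that modelling the hash as outputting a uniform generator makes the two distributions identical.
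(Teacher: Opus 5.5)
Your proof is correct and follows the paper's argument: divide to get $g_1^{r_1-\tilde r_1}=g_2^{\tilde r_2-r_2}$, read off the congruence modulo $q$, and use the distinctness of the pairs to exclude $\tilde r_2=r_2$. The paper leaves implicit the embedding of the DL challenge as $g_2$ and the reading of the pairs as elements of $\Zq^2$; you make both explicit, and that is the right way to state them.
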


\begin{proof}
From $g_1^{r_1-\tilde r_1}=g_2^{\tilde r_2-r_2}$ we get $r_1-\tilde r_1\equiv w_2(\tilde r_2-r_2)\pmod q$. If $\tilde r_2=r_2$ then $r_1=\tilde r_1$, a contradiction, so $\tilde r_2-r_2$ is invertible.
\qed
\end{proof}

\begin{theorem}[Attenuation unforgeability]\label{thm:atten}
In the game of Definition~\ref{def:conf-up-deleg}, assume the delegation and bank signatures are EUF-CMA and $H$ is collision-resistant. Let the adversary's $\oracle{deleg}$ queries for a coin $A$ delegate the purposes $G=\{P_{n_1},\ldots,P_{n_k}\}$, each with its chain, terminal key, and the coin secrets. Then, except with negligible probability, the adversary obtains no accepted redemption of $A$ for an operation outside $\bigcup_j A(P_{n_j})$ other than the operations performed through $\oracle{use}$. Across coins the achievable set is the union over coins, since each redemption presents one coin. The reduction uses no random oracle, so the result holds in the standard model when both signatures are unforgeable in the standard model.
\end{theorem}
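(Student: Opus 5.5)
The plan is a case analysis on the chain presented in a winning redemption, with each case reduced to EUF-CMA of the bank signature, EUF-CMA of $\Sdel$, or collision resistance of~$H$. Fix a winning output: an accepted transcript for coin $A$, that is $(A,D,\ch,N_U,a,t,r_1,r_2,\rho)$ passing (V1)--(V5), for an operation $o^*=\redeem(a,M',t)$ with $o^*\notin\bigcup_j A(P_{n_j})$ and $o^*$ not an operation performed through $\oracle{use}$. Let $\ch=((P_0,\vk_0,\sigma),(P_1,\vk_1,\delta_1),\ldots,(P_n,\vk_n,\delta_n))$.

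\textbf{Step 1: the root.} If $(A,D,P_0,\vk_0)$ was never signed by the bank in an issuance run by the challenger, (V1) yields a forgery. The reduction embeds $\bvk$ and answers $\oracle{grant}$ with the signing oracle, so this case is bounded by the bank's EUF-CMA advantage. Otherwise $\vk_0$ is the honestly generated root key of some honest grant.

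\textbf{Step 2: the chain.} Walk the chain from the root and let $i$ be the first index whose key $\vk_i$ was not produced by an honest holder through $\oracle{grant}$ or $\oracle{att}$ with exactly the certificate $\langle P_i,\vk_i\rangle$. Then $\vk_{i-1}$ is honest. There are two possibilities.
\begin{itemize}
  \item If $\sk_{i-1}$ was never handed out by $\oracle{deleg}$, then $\delta_i$ is a forgery under $\vk_{i-1}$. The reduction guesses which of the polynomially many honest keys this is, losing a polynomial factor, and embeds the $\Sdel$ challenge key there. It answers attenuations and spend authorizations under that key with the signing oracle.
  \item If $\sk_{i-1}$ was delegated, then $P_{i-1}\in G$. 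Monotone attenuation together with (V2) for every later link gives $P_n\sqsubseteq P_{i-1}$, and (V3) gives $o^*\in A(P_n)\subseteq A(P_{i-1})$, contradicting the win.
\end{itemize}
If no such $i$ exists, the whole chain is honest and ends at an honest key $\vk_n$. When $\sk_n$ or any key on the chain was delegated, the same monotonicity argument contradicts the win. So $\sk_n$ never left the honest holders.

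\textbf{Step 3: the terminal key.} We now have $\Vf_{\vk_n}(d,\rho)=1$ by (V4) with $\sk_n$ honest. The honest holder signs only challenges $d'$ from $\oracle{use}$ calls. If $d$ was never signed under $\sk_n$, this is a $\Sdel$ forgery. If $d=d'$ for some honest use of an operation $o'\neq o^*$, the two hash inputs differ in $(M',a,t)$, which gives an $H$ collision, extracted without programming $H$. Hence $o^*=o'$ is an $\oracle{use}$ operation, which is excluded. Summing the cases gives a bound of the form bank advantage plus polynomially many $\Sdel$ advantages plus the collision advantage. No random oracle appears in any step. The cross-coin statement follows because a redemption names a single $(A,D)$ and (V1) binds it to a single root.

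\textbf{Main obstacle.} The delicate part is Step~3 and the honest-key guessing in Step~2. The reduction must simulate every honest signature under the embedded key consistently, both the certificates and the spend authorizations. It must also check that the tags separating certificate messages from spend challenges ensure that a $\delta$ can never be reused as a $\rho$, and vice versa. Without that separation a valid certificate signature might pass as a spend authorization, and no reduction case would catch it. I would handle this using the paper's role-tag encoding convention.
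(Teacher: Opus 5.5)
Your decomposition matches the paper's. You have a bank forgery at the root. You take the least illegitimate certificate and split on whether the key above it was delegated, giving either an $\Sdel$ forgery or a contradiction from (V2), (V3) and monotone attenuation. At an undelegated honest terminal key you get either a forged $\rho$ or a collision in~$H$, with the same bound.

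The gap is in how you choose the index~$i$. You flag $\vk_i$ only if no honest holder produced it with certificate content $\langle P_i,\vk_i\rangle$. You never require that it was produced as a child of $\vk_{i-1}$, i.e.\ that the honest owner of $\vk_{i-1}$ signed $\langle P_i,\vk_i\rangle$. Within one coin this is harmless, because every delegated handle of that coin contributes its purpose to~$G$. But the theorem is per coin, and across coins your step ``if any key on the chain was delegated, monotonicity contradicts the win'' fails.

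Here is a concrete case. Let coin $B$ have an attenuated handle with key $\vk'$, certified by $B$'s holder for some $P'\sqsubseteq P_0$ and then delegated. Then $P'$ lies in coin $B$'s delegated set, but need not lie in the $G$ of coin~$A$. Now take a redemption of coin $A$ with chain $\bigl((P_0,\vk_0,\sigma),(P',\vk',\delta)\bigr)$, for an operation in $A(P')\setminus\bigcup_j A(P_{n_j})$. It passes your walk unflagged, and its terminal key was delegated, so you conclude a contradiction. There is none, since $P'\notin G$. Your argument therefore never bounds this event. Yet $\delta$ is a signature under coin $A$'s root key on a message its owner never signed, which is exactly the forgery that should bound it. Your remark that (V1) binds the redemption to a single root does not close the gap, because nothing in your walk ties the later links to that root. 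As written, you prove only the weaker statement with $G$ replaced by the delegated purposes of all coins (the game's set~$Q$), not the per-coin claim.

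The repair is the paper's notion of an \emph{authentic} chain: take $i$ least such that the honest owner of $\vk_{i-1}$ did not sign $\langle P_i,\vk_i\rangle$. Every key on the authentic prefix is then a descendant of coin $A$'s root, created by $\oracle{att}$ with handle purpose equal to its chain purpose. So delegating any of them puts that purpose in~$G$. Your two subcases and Step~3 then go through unchanged. The spliced link above becomes an $\Sdel$ forgery under $\vk_0$, or, if $\sk_0$ was delegated, $P'\sqsubseteq P_0\in G$ gives the contradiction.

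Otherwise the paper differs only in presentation. It first cases on the terminal key (honest and kept, honest and delegated, not honest), and only then locates the first non-authentic certificate.
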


\begin{proof}
An accepted redemption presents a chain $\sigma,(P_1,\vk_1,\delta_1),\ldots,(P_m,\vk_m,\delta_m)$ with $P_i\sqsubseteq P_{i-1}$ for all~$i$ (V2), a spend authorization $\rho$ valid under $\vk_m$ (V4), and $o\in A(P_m)$ (V3). Purposes along the chain do not increase, so $A(P_m)\subseteq A(P_0)$, and no redemption exceeds the bank-signed $P_0$ unless $\sigma$ is forged.

Suppose the coalition wins, so $o\notin\bigcup_j A(P_{n_j})$, $o$ was not performed through $\oracle{use}$, and hence $P_m\not\sqsubseteq P_{n_j}$ for every~$j$ (otherwise $o\in A(P_m)\subseteq A(P_{n_j})$ by Definition~\ref{def:monotone}). Call a key \emph{honest} if an honest holder or delegator generated it. Secrets leave honest parties only through $\oracle{deleg}$, so the coalition holds the secret of an honest key only if that key was delegated, with purpose in~$G$. An honest-use transcript contains a spend authorization on its own challenge, which hashes its operation, so reusing it for $o$ requires a collision in~$H$. We show that the winning transcript contains a valid signature, a certificate $\delta_i$ or the spend authorization~$\rho$, under an honest key on a message its owner never signed.

\emph{Case 1: $\vk_m$ is honest and not delegated.} Then $\rho$ is a signature under $\vk_m$ on a challenge its owner never signed: the owner signed only the challenges of its own $\oracle{use}$ spends, whose operations differ from~$o$, so an equal challenge is a collision in~$H$.

Call the chain \emph{authentic} if the bank issued $\sigma$ and, for each~$i$, the honest owner of $\vk_{i-1}$ signed $\langle P_i,\vk_i\rangle$. An honest key is certified once, by its creator, for the purpose of its handle.

\emph{Case 2: $\vk_m$ is honest and delegated.} On an authentic chain $P_m$ is the purpose with which $\vk_m$ was delegated, so $P_m\in G$, a contradiction. So the chain is not authentic.

\emph{Case 3: $\vk_m$ is not honest.} The coalition may supply $\vk_m$ without knowing a secret key for it. An authentic chain ends at an honest key, so again the chain is not authentic.

In Cases 2 and 3, let $i^\star$ be the least index whose certificate was not honestly issued. Unless $\sigma$ is forged, $i^\star\ge1$ and the prefix up to $\vk_{i^\star-1}$ is authentic, so $\vk_{i^\star-1}$ is honest. If the coalition held $\sk_{i^\star-1}$, then $\vk_{i^\star-1}$ was delegated for $P_{i^\star-1}\in G$, and (V2) would give $P_m\sqsubseteq P_{i^\star-1}$, a contradiction. So the honest owner of $\vk_{i^\star-1}$ kept its key and never signed $\langle P_{i^\star},\vk_{i^\star}\rangle$, yet (V2) accepts $\delta_{i^\star}$.

A reduction guesses which honest key receives the forgery, or that it is the bank signature, plants its EUF-CMA challenge key there, answers every honest signature with its signing oracle, and outputs the forged $\delta_{i^\star}$, $\rho$, or~$\sigma$. With $q_k$ honest delegation keys, and $\Adv[X]{EUF}$ and $\Adv[H]{CR}$ the maximal advantages of a PPT adversary in forging~$X$ and finding a collision in~$H$,
\[
  \Adv{atten}(\cA)\le q_k\cdot\Adv[\Sdel]{EUF}+\Adv[\mathsf{bank}]{EUF}+\Adv[H]{CR}.
\]
The spend authorization (V4) rules out the \emph{truncation attack}, in which the coalition presents a shorter, all-honest chain ending at the key of a wider prefix, a key it was never given.
\qed
\end{proof}

\begin{theorem}[One-use traceability]\label{thm:trace}
This is traceable limitation for use bound~1, with $L$ the bank's deposit log, in which a use is a distinct tuple $(A,D,d,r_1,r_2)$, $R$ its withdrawal-record store, and the secret the instance $(A,D)$. Assume discrete log is hard in $\G$, in the random-oracle model. If $L$ contains two tuples for an instance $(A,D)$ issued under pseudonym $Y=g_1^u$, then, except with negligible probability, $\Trace(\pp,(A,D),L,R)$ recovers~$u$, hence~$Y$, and outputs the withdrawing account, whose record for $(A,D)$ is unique. No nonce freshness is assumed, and merchants may collude with the spender, though logged tuples pass (V1)--(V5). That a holder cannot forge a second certified instance from one withdrawal is unforgeability of the bank signature.
\end{theorem}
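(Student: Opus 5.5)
The plan is to show that two distinct logged tuples for one instance $(A,D)$ almost always have distinct challenges $d\neq d'$. From this, $r_2\neq r_2'$ and the extraction formula recovers the $u$ fixed at issuance. I would then show that the record in $R$ for $(A,D)$ is unique and carries $Y=g_1^u$, so $\Trace$ outputs the withdrawing account.

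\emph{Step 1 (distinct challenges).} Let $(A,D,d,r_1,r_2)\neq(A,D,d',r_1',r_2')$ both be in $L$, so both pass (V5). If $d=d'$, then $g_1^{r_1}g_2^{r_2}=A^dD=g_1^{r_1'}g_2^{r_2'}$ with $(r_1,r_2)\neq(r_1',r_2')$. By Lemma~\ref{lem:rep} this solves discrete log, since the reduction can embed $g_2$ and simulate everything else, including the bank key and $H$. Hence $d\neq d'$ except with negligible probability. Nonce freshness is not needed here, which matches the statement: colluding merchants may reuse $N$, but equal $d$ is exactly the case the lemma excludes.

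\emph{Step 2 (the opening matches the issued secrets).} By extractability of $\pi$ in the random-oracle model, the reduction rewinds the adversary at issuance and obtains $(u,s)$ with $Y=g_1^u$ and $A=g_1^{us}g_2^s$. Because the bank refuses $A=1$, we have $s\neq0$. Next I claim that every accepted opening of $(A,D)$ with challenge $d$ is consistent with a single representation $(\alpha_1,\alpha_2)$ of $D$, namely $r_1=usd+\alpha_1$ and $r_2=sd+\alpha_2$. The argument is as follows. From two openings with $d\neq d'$, dividing the two (V5) equations gives $g_1^{r_1-r_1'}g_2^{r_2-r_2'}=A^{d-d'}$. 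This is a representation of $A^{d-d'}$. If it differs from $(us(d-d'),s(d-d'))$, Lemma~\ref{lem:rep} again yields $w_2$. Therefore $r_2-r_2'=s(d-d')\neq0$ and $r_1-r_1'=us(d-d')$, so $\Trace$ computes $(r_1-r_1')(r_2-r_2')^{-1}=u$.

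\emph{Step 3 (the record).} Issuance stores a record $(\avk_U,Y,A,D,\rho_w)$ containing the proved $Y$. The bank refuses a repeated $(A,D)$, so with an honest bank this record is the unique one for the instance and it verifies. $\Trace$ therefore outputs $\avk_U$.

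I expect the main obstacle to be the extraction in Step 2 when there are many withdrawals and many random-oracle queries. Rewinding must be done for the specific instance that is later double-spent, while the DL reduction still programs $g_2$. My first attempt would guess the target withdrawal, which loses a factor equal to the number of withdrawals, and apply the forking lemma only to its proof $\pi$, keeping the other proofs honestly generated or simulated. The resulting bound would be roughly $q_w\sqrt{q_H\cdot\Adv{DL}}$ plus negligible terms. One more point needs checking: the statement $Y=g_1^u$ uses only $g_1$, so the $u$ from extraction is the same $u$ that appears in the opening, and no further representation argument is needed there.
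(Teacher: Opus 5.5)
Your proposal is correct and follows the paper's proof. The steps match: divide the two (V5) equations, extract $(u,s)$ by forking on the Fiat--Shamir query of this instance's $\pi$, compare the quotient with the known representation $\bigl(us(d-d'),s(d-d')\bigr)$ of $A^{d-d'}$, apply Lemma~\ref{lem:rep} to any off-form pair, and finish with $A\neq1$ and the bank's uniqueness check on the record. The differences are only in accounting: you handle $d=d'$ directly without rewinding, and you guess the target withdrawal (losing a factor $q_w$), whereas the paper lets the forking lemma's choice of query index pick the instance and so obtains $\varepsilon\le q_H/q+\sqrt{q_H\cdot\Adv{DL}}$.
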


\begin{proof}
The reduction $\cB$ is given $(g_1,g_2)$, must output~$w_2$, and plays the bank. Dividing the two (V5) equations cancels~$D$:
\[
  g_1^{\Delta_1}g_2^{\Delta_2}=A^{d-d'},\qquad (\Delta_1,\Delta_2):=(r_1-r_1',\ r_2-r_2').
\]
Extracting $\pi$ by rewinding (accounted for below) gives $u$ and $s$, so $\cB$ knows the representation $\bigl(us(d-d'),\ s(d-d')\bigr)$ of $A^{d-d'}$. If $d=d'$, then $A^{d-d'}=1$ has the representation $(0,0)$, and since the tuples differ, $(\Delta_1,\Delta_2)\neq(0,0)$ is a second one, giving $w_2$ by Lemma~\ref{lem:rep}. If $d\neq d'$ and $(\Delta_1,\Delta_2)\neq\bigl(us(d-d'),s(d-d')\bigr)$, again $\cB$ outputs~$w_2$. Otherwise $\Delta_2=s(d-d')\neq0$, since the bank's check $A\neq1$ forces $s\neq0$ (by the check, not by sampling, since $s$ is adversarial), and $\Delta_1\Delta_2^{-1}=u$; so $\Trace$ outputs $\bot$ on $\Delta_2=0$ only for an off-form pair. Then $Y=g_1^u$ is the declared pseudonym, and the bank's record for $(A,D)$, unique and verifying because the bank checked both at issuance, names the account.

Call an opening pair \emph{off-form} if it is not the representation above; $\Trace$ fails only on an off-form pair. Let $\varepsilon$ be its probability and $q_H$ the number of random-oracle queries. As usual, the adversary queries $H$ on the input of every $\pi$ it submits, at the cost of one query each in $q_H$, so each accepted $\pi$ has a query that fixed its challenge; $\cB$ selects the query for this instance's $\pi$, whose input contains $(Y,A)$, rewinds the adversary to it, and reruns it with a fresh answer. By the general forking lemma~\cite{pointcheval2000security,bellare2006forking}, with probability at least $\varepsilon(\varepsilon/q_H-1/q)$ both runs produce an accepted $\pi$ with the same commitment and different challenges, giving $(u,s)$, together with an off-form pair, giving $w_2$. So $\varepsilon(\varepsilon/q_H-1/q)\le\Adv{DL}$, which gives
\[
  \Pr[\Trace\text{ fails}]\;\le\;\varepsilon\;\le\;\frac{q_H}{q}+\sqrt{q_H\cdot\Adv{DL}},
\]
negligible in~$\lambda$.
\qed
\end{proof}

\begin{theorem}[Non-frameability]\label{thm:nonframe}
This is the non-frameability of Definition~\ref{def:trace} for the token, per instance, with ``sharing'' meaning delegating that instance. Assume discrete log is hard in $\G$ and the account signature is EUF-CMA, in the random-oracle model. Then no PPT coalition, including a corrupt bank, merchants, and delegates of other instances, makes $\Trace$ name an honest account~$U$ for an instance that $U$ never withdrew, or withdrew but neither delegated nor spent twice, except with negligible probability.
\end{theorem}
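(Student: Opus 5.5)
The proof splits on how $\Trace$ can come to name an honest account~$U$ for an instance $(A,D)$. $\Trace$ outputs $U$ only if $R$ holds a unique record $(\avk_U,Y,A,D,\rho_w)$ with $\Vf_{\avk_U}((Y,A,D),\rho_w)=1$, and $Y=g_1^{u^\dagger}$ for the value $u^\dagger=\Delta_1\Delta_2^{-1}$ computed from two logged tuples. Because the bank may be corrupt and writes~$R$, the first case is that $U$ never signed $(Y,A,D)$. Then $\rho_w$ is a forgery under $\avk_U$. A reduction guesses which honest account is framed (a loss of a factor $q_U$, the number of honest accounts) and plants the EUF-CMA challenge key there. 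It answers the account's withdrawal-record signatures with its signing oracle. The role tags ensure that no other signature the account produces collides with a withdrawal record. This case yields the term $q_U\cdot\Adv[\Sdel]{EUF}$.

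In the remaining case, $U$ did sign $(Y,A,D)$, so $U$ honestly withdrew $(A,D)$ with secrets $u,s,a_1,a_2$ that it sampled. By hypothesis $U$ never delegated this instance and spent it at most once. At least one of the two logged tuples is therefore not an honest spend by~$U$. Since $\Trace$ recovered $g_1^{u^\dagger}=Y$, the adversary effectively produced an opening $(r_1',r_2')$ of $A^{d'}D$ satisfying (V5) on its own. I would embed a DL instance here: guess the target instance among the $q_w$ honest withdrawals, and in it generate $A$ and $D$ with unknown $u$ and $s$. Concretely, I would set $Y:=g_1^{u}$ from the challenge in a $(g_1,Y)$ DL instance, or else embed $w_2$ by knowing $u$ but not a representation of~$A$. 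The PoK $\pi$ is simulated by programming~$H$ (zero-knowledge in the ROM). As in the design note, the single honest spend, if any, is simulated by choosing $(r_1,r_2)$ and the challenge $d$ first and solving $D:=g_1^{r_1}g_2^{r_2}A^{-d}$. The value $d$ is then programmed into $H$ at the redemption query, which is fresh except with probability about $q_H/2^\lambda$ because of the spender nonce~$N_U$. The spend authorization $\rho$ is produced with $\sk_n$, which the reduction knows.

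From the adversarial tuple together with the simulated one, the same computation as in $\Trace$ yields $u$ whenever $d\neq d'$. If $d=d'$, or if the pair is off-form, Lemma~\ref{lem:rep} applied to $A^{d-d'}$ gives~$w_2$. The reduction therefore runs in two modes: in one it knows $w_2$ and embeds the challenge as $u$; in the other it knows $u$ and $s$ and embeds the challenge as $w_2$. It picks a mode at random, losing a factor~$2$. When there is no honest spend at all, both tuples are adversarial. I would then fork on the redemption-challenge query of one of them, or more simply apply the two-tuple identity directly: $u^\dagger=u$ reveals $u$ in the $Y$-embedding mode.

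The main obstacle is that simulating $D$ requires fixing $d$ before~$D$ is sent, whereas in the real protocol $D$ is fixed at withdrawal and $d$ is computed at spend time. The simulation must commit to a $d$ and later program $H$ at exactly that point. Doing so needs the redemption hash input to be unpredictable at withdrawal time, which $N_U$ provides, and the argument must also make the $u$-embedding and the simulated transcripts consistent, since $A=g_1^{us}g_2^s$ uses $u$ multiplicatively. I would handle the latter by choosing $s$ myself and setting $A:=(Yg_2)^s$, which requires only $Y$. The final bound then collects
\[
q_U\Adv[\Sdel]{EUF}+2q_w\Adv{DL}+\tfrac{q_H^2}{2^\lambda}
\]
together with the forking loss inherited from Theorem~\ref{thm:trace}.
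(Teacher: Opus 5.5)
Your proposal is correct and follows the paper's proof. It uses the same split on whether $U$ signed the naming record, the same EUF-CMA reduction over a guessed account for a forged record, and, for a genuine record, the same discrete-log embedding: the challenge becomes the pseudonym of the guessed instance with $A:=(Yg_2)^s$ for your own $s$, $\pi$ is simulated, and the one honest spend is simulated by fixing $(d,r_1,r_2)$ at withdrawal, solving for $D$, and programming $H$ at the point that $N_U$ makes fresh; you should state explicitly that the reduction aborts if that instance is later delegated or spent again. The second mode embedding $w_2$, the appeal to Lemma~\ref{lem:rep}, and the forking term are unnecessary: $\Trace$ names $U$ only if the record carries $g_1^{u^\dagger}$, so any framing through the planted record already gives you $u^\dagger=\log_{g_1}Y$ for any $d,d'$ and any form of the openings, and your bound therefore holds with $q_w$ in place of $2q_w$.
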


\begin{proof}[sketch]
$\Trace$ names $U$ only through a record verifying under $\avk_U$. A record $U$ did not sign is a forgery of the account signature. For a record $U$ did sign, on an instance $U$ neither delegated nor double-spent, a reduction plants a discrete-log challenge as the coin's pseudonym $Y^*$ and simulates the one honest spend by choosing the opening first, solving for $D$, and programming the random oracle, so any framing of $U$ recovers $\log_{g_1}Y^*$; the corrupt-bank case is included. The full proof is in Appendix~\ref{app:nonframe}.
\qed
\end{proof}

\begin{corollary}[Verifier-gated mediation]\label{cor:med}
Under the assumptions of Theorem~\ref{thm:atten}, the token is mediation-secure in the verifier-gated mode, with the secret a bearer object and $\Acc$ the checks (V1)--(V5) run by the merchant or bank.
\end{corollary}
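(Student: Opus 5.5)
The plan is to obtain the corollary from Remark~\ref{rem:med-conf} together with Theorem~\ref{thm:atten}. First I will check that the token fits the verifier-gated mode of Definition~\ref{def:med}. The acceptance algorithm $\Acc(\pp,o,y)$ recomputes $d$ from the transcript $y$ and runs (V1)--(V5); it reads the purpose from the chain in~$y$, namely the terminal purpose $P_n$. By correctness it accepts every honest use of an admissible operation: (V5) holds since $A^dD=g_1^{usd+a_1}g_2^{sd+a_2}$, and (V1), (V2) and (V4) hold for honestly produced signatures. It also accepts only if $\Succ(o,y)=1$, because $\Succ$ for the token is defined as merchant acceptance of the redemption transcript. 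This also fixes the reading of ``accepted'' used in Theorem~\ref{thm:atten}.

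Next I will embed the gated game into the delegation game of Definition~\ref{def:conf-up-deleg}. Given a gated adversary $\cA$, build $\cB$ that answers each $\algo{new}(P_i)$ with $\oracle{grant}(P_i)$ followed by $\oracle{deleg}$ on the returned handle. Then $Q$ equals the set of purposes passed to $\algo{new}$, and $\cB$ holds the bearer secrets exactly as $\cA$ expects, since the gated mode hands the secret over. The output $(o^*,y^*)$ with $\Acc(\pp,o^*,y^*)=1$ and $o^*\notin\bigcup_i I(P_i)$ is then an accepted redemption with $\Succ(o^*,y^*)=1$. Moreover $\mathcal T$ is empty, because $\cB$ never calls $\oracle{use}$. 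So a gated win is a delegation-game win.

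Since $A=I$ for the token (Section~\ref{sec:token-purposes}), the target $o^*\notin\bigcup_i A(P_i)$ is exactly what Theorem~\ref{thm:atten} rules out for each coin. For the multi-coin case I will apply Theorem~\ref{thm:atten} to the single coin $A$ presented in the winning transcript. Its delegated set $G$ is contained in $Q$, so an operation outside $\bigcup_{P\in Q}A(P)$ is in particular outside $\bigcup_j A(P_{n_j})$ for that coin. This gives $\Adv{med}(\cA)\le q_k\cdot\Adv[\Sdel]{EUF}+\Adv[\mathsf{bank}]{EUF}+\Adv[H]{CR}$.

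The main obstacle is making precise that the merchant's $\Acc$ is the same acceptance condition that Theorem~\ref{thm:atten} bounds. In particular, $d$ must be recomputed from the claimed $M'$, $a$ and $t$ in~$o^*$, so that the operation is the one bound into the hash. I will argue this directly from the definition of $\Acc$. The honest-verifier and authenticated-channel assumptions are stated in Section~\ref{sec:token-scheme} as part of the mediation trust, and I will cite them there.
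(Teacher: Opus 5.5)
Your proposal is correct and follows the paper's proof. Like the paper, you answer each $\algo{new}(P_i)$ with $\oracle{grant}(P_i)$ followed by $\oracle{deleg}$, so that $Q$ is the set of queried purposes and $\mathcal{T}$ is empty, and then invoke Theorem~\ref{thm:atten} with the exact proxy $A=I$, which is Remark~\ref{rem:med-conf} specialized to the token. Your extra checks, that (V1)--(V5) meet the gated-mode requirements on $\Acc$ and that the delegated set $G$ of the presented coin lies in $Q$, only spell out what the paper's two-line proof leaves implicit.
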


\begin{proof}
$\algo{new}(P)$ is $\oracle{grant}(P)$ followed by $\oracle{deleg}$, so a gated win is an accepted redemption of an operation outside $\bigcup_j I(P_{n_j})=\bigcup_j A(P_{n_j})$, the proxy being exact, with no $\oracle{use}$ queries. Theorem~\ref{thm:atten} bounds its probability; this is Remark~\ref{rem:med-conf} for the token.
\qed
\end{proof}

\subsection{Comparison with Related Designs}\label{sec:token-compare}

\begin{table}[t]
  \caption{Properties of related designs in their base form. Each property of the token appears in an earlier design; the token combines them. Public verif.: verifiers hold no key that lets them mint credentials. --: not provided.}
  \label{tab:compare}
  \centering\small
  \begin{tabular}{@{}lccccc@{}}
    \toprule
    Design & Attenuation & Public verif. & Use bound & Over-use traced & Payer anonymity \\
    \midrule
    Macaroons~\cite{birgisson2014macaroons} & $\checkmark$ & -- & -- & -- & -- \\
    SPKI/SDSI~\cite{rfc2693} & $\checkmark$ & $\checkmark$ & online check & -- & -- \\
    Brands e-cash~\cite{brands1993cash} & -- & $\checkmark$ & 1 & $\checkmark$ & $\checkmark$ \\
    Compact e-cash~\cite{camenisch2005compact} & -- & $\checkmark$ & wallet size & $\checkmark$ & $\checkmark$ \\
    Deleg.\ anon.\ cred.~\cite{belenkiy2009delegatable} & $\checkmark$ & $\checkmark$ & -- & -- & $\checkmark$ \\
    \midrule
    This token & $\checkmark$ & $\checkmark$ & 1 & $\checkmark$, per coin & -- \\
    Anonymous variant & -- & $\checkmark$ & 1 & $\checkmark$, per account & $\checkmark$ \\
    \bottomrule
  \end{tabular}
\end{table}

The token combines attenuation with public verification, a per-coin use bound with tracing, and non-frameability for coins the account never delegated. None of these properties is new, and the token does not provide anonymity together with delegation (Table~\ref{tab:compare}). Macaroons attenuate, but their verifiers can mint, which suits a service that both issues and verifies. SPKI/SDSI chains attenuate with public verification, and the token's chain is an instance of them; SPKI bounds use only through an online check with the issuer, which is stateful mediation, and does not trace over-use. Brands' e-cash and compact e-cash bound use and trace double-spenders while keeping honest payers anonymous, but have no delegation and no purpose beyond the coin's value. Delegatable anonymous credentials attenuate along anonymous chains but neither bound nor trace use. The anonymous variant below gives payer anonymity without delegation.

\subsection{An Anonymous Variant}\label{sec:token-anon}

The token is not anonymous: the bank sees $(A,D)$ at issuance, so it can link a redemption to the withdrawal. Hiding $(A,D)$ removes the per-coin record that tracing uses, and tracing must then go through a per-account value, which a delegate would learn (Section~\ref{sec:token-scheme}). An anonymous variant therefore drops delegation and replaces the per-coin identity by a per-account one, as in Brands' scheme~\cite{brands1993cash}: the account registers $Y=g_1^u$ once, which the bank refuses if $Yg_2=1$, since every coin would then be~$1$, and each coin is $A=(Yg_2)^s$ for a fresh $s\neq0$. The bank signs $(A,D,\vk_0)$ blindly. The maximal purpose stays authenticated through the key: the bank keeps one signing key for each of a fixed set of maximal purposes $P_0$, as e-cash keeps one key per denomination, and (V1) uses the key for~$P_0$.

\paragraph{Issuance in the variant.}
Issuance is Brands' withdrawal protocol~\cite{brands1993cash} with $\vk_0$ added to the signed message. For a third generator $g_0$, the key for $P_0$ is $x\in\Zq$ with public $h=g_0^x$, and each account holds $z=(Yg_2)^x$. A signature on $m=(A,D,\vk_0)$ is a tuple $\sigma=(z',\alpha',\beta',\gamma')$ with
\[
  g_0^{\gamma'}=h^{c'}\alpha',\qquad A^{\gamma'}=(z')^{c'}\beta',\qquad c'=H(m,z',\alpha',\beta'),
\]
the hash value read in $\Zq\setminus\{0\}$. In a session the bank sends $\alpha=g_0^{\eta}$ and $\beta=(Yg_2)^{\eta}$ for a fresh~$\eta$. The holder draws $s\neq0$, the pads, $(\sk_0,\vk_0)\gets\Kg$, and blinding values $e_1\in\Zq\setminus\{0\}$ and $e_2\in\Zq$; it forms $A$ and $D$, sets $z'=z^s$, $\alpha'=\alpha^{e_1}g_0^{e_2}$, and $\beta'=\beta^{se_1}A^{e_2}$, and sends $c=c'/e_1$. The bank answers $\gamma=cx+\eta$; the holder accepts iff $g_0^{\gamma}=h^c\alpha$ and $(Yg_2)^{\gamma}=z^c\beta$, and sets $\gamma'=e_1\gamma+e_2$. Redemption is as in Section~\ref{sec:token-scheme} with the empty chain, so $\rho=\Sg_{\sk_0}(d)$, and (V1) checks $\sigma$ under the key for~$P_0$. The key is \emph{well formed} if every account's $z$ is $(Yg_2)^x$ for the~$x$ with $h=g_0^x$; the bank can make this checkable by publishing $g_1^x$ and $g_2^x$ with a proof that they share the exponent of~$h$, so that each holder computes~$z$ itself. A bank that gave two accounts values $z$ with different exponents could tell their coins apart by~$z'$.

\begin{lemma}[Issuance view]\label{lem:issue-view}
Let $W=(s,a_1,a_2,\sk_0)$ be a holder's coin values, $m=(A,D,\vk_0)$ the message they determine for account~$Y$, and $S(m)$ the set of valid signatures on $m$ whose first component is $A^x$. For every bank strategy under a well-formed key and every view $v$ of a session in which the holder accepts,
\[
  \Pr[W=w,\ \sigma=\zeta\mid V=v]=\Pr[W=w]\cdot\frac{[\,\zeta\in S(m)\,]}{q}.
\]
So, given the bank's view, the coin values keep their honest distribution, and $\sigma$ is uniform on $S(m)$ and depends on the coin values only through~$m$.
\end{lemma}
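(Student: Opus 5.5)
The plan is the standard perfect-blindness argument for Brands' signatures: a counting argument over the holder's blinding randomness $(e_1,e_2)$ together with the random-oracle answer $c'$. Fix a bank strategy, a well-formed key $x$ with $h=g_0^x$, and an accepting view $v$. The view consists of the bank's coins, the account $Y$ with $z=(Yg_2)^x$, the first message $(\alpha,\beta)$, the challenge $c$ and the response $\gamma$. Acceptance means $g_0^{\gamma}=h^c\alpha$ and $(Yg_2)^{\gamma}=z^c\beta$. I will show two things. First, for every coin value $w$, determining $m=(A,D,\vk_0)$ and $A=(Yg_2)^s$, and every $\zeta=(z',\alpha',\beta',\gamma')\in S(m)$, there is exactly one blinding pair $(e_1,e_2)$ that, with the bank's messages fixed by $v$, produces $\zeta$ and sends $c$. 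Second, the number of randomness tuples consistent with $v$ does not depend on $w$. Bayes' rule then gives the product form, because $W$ is drawn independently of $(e_1,e_2)$ and of $H$.

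\emph{Step 1: existence and uniqueness of the blinding pair.} Given $\zeta\in S(m)$, set $c'=H(m,z',\alpha',\beta')\in\Zq\setminus\{0\}$. Since $c=c'/e_1$, I take $e_1:=c'/c$; this requires $c\neq0$, which holds because an honest holder never sends $c=0$. Then I take $e_2:=\gamma'-e_1\gamma$. Both values are forced, which gives uniqueness. For existence, I check the holder's formulas. From the signature equation and acceptance,
\[
\alpha'=g_0^{\gamma'}h^{-c'}=g_0^{e_1\gamma+e_2}h^{-e_1c}=(h^c\alpha)^{e_1}g_0^{e_2}h^{-e_1c}=\alpha^{e_1}g_0^{e_2}.
\]
Raising the second acceptance equation to the power $s$ and using $A=(Yg_2)^s$ and $z'=A^x=z^s$ gives $\beta'=\beta^{se_1}A^{e_2}$ in the same way. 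The condition $z'=z^s$ is exactly where well-formedness enters: it ensures that $z'=A^x$ for every account, so the first component of $\zeta$ carries no account-specific exponent.

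\emph{Step 2: counting.} For fixed $w$, the holder's free randomness is $e_1\in\Zq\setminus\{0\}$ and $e_2\in\Zq$, and the random oracle supplies $c'$ at the point the holder queries. The only constraint that $v$ imposes on this randomness is $c'=e_1c$, since $\alpha,\beta,\gamma$ come from the bank and acceptance holds for every $w$ by Step 1. The map $(e_1,e_2)\mapsto\zeta$ is injective, with inverse as in Step 1. Its image, restricted to the event that the oracle answer equals $e_1c$, is therefore in bijection with $S(m)$. Each element of $S(m)$ then has conditional probability $1/q$, namely $\Pr[e_2=\cdot\,]$, with the factor for $e_1$ cancelled by the oracle-consistency event. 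This probability is the same for every $w$, so the conditional distribution of $W$ equals its prior.

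\emph{Main obstacle.} The expected difficulty is the circularity introduced by the random oracle: $c'$ is a hash of $\alpha',\beta'$, which themselves depend on $e_1,e_2$. This makes "$|S(m)|=q$" and "$\sigma$ uniform on $S(m)$" delicate to state as pure counting. I would first try to fix the oracle's answers everywhere except on the holder's query and treat that single answer as a fresh uniform element of $\Zq\setminus\{0\}$. The claimed identity then becomes an identity over the joint space of $(e_1,e_2,c')$, in which $e_1$ and $c'$ determine each other given $c$, and $S(m)$ is parametrized by $\gamma'$ (equivalently $e_2$) alone, giving exactly $q$ elements. If this does not go through cleanly, the fallback is to state the lemma for the interactive signature with $c'$ treated as the holder's own uniform choice. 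The hashed version then follows because the oracle answer on a fresh point is distributed identically to that choice.
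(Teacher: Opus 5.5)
Your Step~1 is correct and matches the paper's inversion. You read $c'=H(m,z',\alpha',\beta')$ off $\zeta$, set $e_1=c'/c$ and $e_2=\gamma'-e_1\gamma$, and check that the holder recomputes $\alpha',\beta'$, hence the same $c'$, hence sends~$c$.

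The gap is in Step~2, where you switch models and treat $c'$ as a fresh oracle answer independent of $(e_1,e_2)$. In that joint space the count does not give the lemma. Given $v$ and $w$, the event $c'=e_1c$ has probability $1/(q-1)$ for every pair, so conditioning on it leaves $(e_1,e_2)$ uniform on all $q(q-1)$ pairs. By injectivity each output then has probability $1/(q(q-1))$, not $1/q$: the consistency event cancels against its own normalization, not against the factor for~$e_1$. The model also clashes with the statement itself, because $S(m)$ is defined through the realized hash, so $[\zeta\in S(m)]$ presupposes a fixed~$H$.

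Neither of your proposed repairs closes this. There is no single ``holder's query'' at which to leave the oracle free, since the queried point $(m,A^x,\alpha',\beta')$ moves with $(e_1,e_2)$ over $q$ points. An argument from freshness of that point needs the bank never to have queried it, which an unbounded bank need not respect, and it would give at best a statistical bound. The lemma is an exact identity, and Theorem~\ref{thm:unlink} applies it to an unbounded bank. Likewise, for a fixed hash $S(m)$ is not parametrized by $\gamma'$ or by $e_2$, because the map $\tau\mapsto\tau+xH(m,A^x,g_0^\tau,A^\tau)$ need not be injective.

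The missing idea is that no randomness in $H$ is needed. The circularity you worry about never arises, because the count runs from $\zeta$ back to $(e_1,e_2)$.
\begin{itemize}
\item Fix any $H$ with nonzero values. A valid signature with first component $A^x$ and hash value $c'$ satisfies $\alpha'=g_0^{\gamma'-xc'}$ and $\beta'=A^{\gamma'-xc'}$.
\item With $\tau:=\gamma'-xc'$, the signature is therefore $(A^x,g_0^{\tau},A^{\tau},\tau+xH(m,A^x,g_0^{\tau},A^{\tau}))$. So $|S(m)|=q$ for every~$m$, parametrized by~$\tau$.
\item You also need the converse of Step~1: each holder output verifies by the acceptance checks and has $z'=z^s=A^x$ by well-formedness. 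With it, exactly $q$ of the $q(q-1)$ blinding pairs are consistent with $v$, for every~$w$.
\item Hence $\Pr[V=v,W=w,\sigma=\zeta]=P(v)\Pr[W=w]\,[\zeta\in S(m)]/(q(q-1))$, where $P(v)$ is the bank's factor; it may depend on $c$ but not on $(w,\zeta)$.
\item Summing gives $\Pr[V=v]=P(v)/(q-1)$, and dividing gives the lemma for every fixed~$H$. It then holds in the random-oracle model after conditioning on~$H$.
\end{itemize}
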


\begin{theorem}[Payer unlinkability, anonymous variant]\label{thm:unlink}
In the anonymous variant under a well-formed key, let an adversary of unbounded power play the bank and all merchants, run two sequential issuance sessions with honest accounts of its choice under the key for one~$P_0$, and then see each coin redeemed once, in random order, for one public transaction $(M',a,t)$. It guesses the order with advantage~$0$. So issuance is unlinkable to redemption, and distinct single spends are mutually unlinkable, among coins issued for the same $P_0$ and redeemed for the same transaction.
\end{theorem}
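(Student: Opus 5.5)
The plan is to show that the adversary's full view, meaning both issuance transcripts $v_1,v_2$ together with the two redemption transcripts in either order, has the same distribution whether the order bit is $0$ or $1$. Conditioning on the issuance views, I would use Lemma~\ref{lem:issue-view} for each session separately. The two sessions are sequential and their holder values are independent, so the lemma applies to each, and the joint conditional law factors. Each coin's values $W_j=(s_j,a_{1,j},a_{2,j},\sk_{0,j})$ keep their honest prior distribution, and each $\sigma_j$ is uniform on $S(m_j)$, depending on $W_j$ only through $m_j$. Two further points make this work. First, under a well-formed key, $S(m)$ depends only on $m$ and the public key, not on the account, because the first component must be $A^x$. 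Second, the prior on $W_j$ does not depend on the account $Y_j$ except through how $A_j=(Y_jg_2)^{s_j}$ is formed.

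I then compute the distribution of one redemption transcript $(A,D,\sigma,\vk_0,N_U,a,t,d,r_1,r_2,\rho)$ given an issuance view, for the fixed public transaction $(M',a,t)$ and a merchant nonce $N$ chosen by the adversary. The key step is to show that this distribution does not depend on the account $Y=g_1^u$, so it is identical for both coins, and that the two transcripts are independent given $(v_1,v_2)$. I would reparametrize the coin values as follows:
\begin{itemize}
\item $A=(Yg_2)^s$ is uniform on $\G\setminus\{1\}$ for uniform $s\neq0$, since $Yg_2\neq1$ is checked at registration;
\item $\vk_0$ and $N_U$ are fresh, so $d$ is determined by public data and these values;
\item $(r_1,r_2)$ is uniform in $\Zq^2$, because $(a_1,a_2)$ is uniform and independent, and $D=g_1^{r_1}g_2^{r_2}A^{-d}$ is then determined;
\item $\sigma$ is uniform on $S(A,D,\vk_0)$;
\item $\rho$ is a signature under $\sk_0$ drawn from the honest $\Sdel$ distribution.
\end{itemize}
Under this parametrization the transcript has a description with no $u$ and no session index. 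The ordering then only permutes two independent, identically distributed transcripts whose law is fixed given the views. Both orders therefore produce the same joint law, and the advantage is exactly $0$. The redemption nonces $N$ may be chosen adaptively by the unbounded adversary. I would handle this by revealing the transcripts one at a time and applying the same conditioning argument, including the first transcript in the view when treating the second.

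I expect the main obstacle to be the final independence step across the two coins. The transcript for coin $j$ depends on $v_j$ through $S(m_j)$ and the uniform choice of $\sigma_j$, so I have to check that Lemma~\ref{lem:issue-view} really decouples $(W_j,\sigma_j)$ from $v_j$ entirely. Since $v_j$ fixes $\gamma$, $c$, $\alpha$ and $\beta$, it might seem to correlate with $\sigma_j$. The lemma's factorization, with a uniform $1/q$ over $S(m)$ for every view, is exactly what removes this correlation. My first attempt would be to write the joint probability of $(v_1,v_2,\tau_{\pi(1)},\tau_{\pi(2)})$ as a product, using the lemma twice and the fresh redemption randomness, and check that it is symmetric in $\pi$. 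The remaining subtle point is to confirm that well-formedness makes $z'$ carry no account information, since $z'=A^x$ is determined by $A$.
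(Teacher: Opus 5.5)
Your plan follows the paper's proof: condition on both issuance views through Lemma~\ref{lem:issue-view}, show that one redemption transcript has a law independent of the account and the session, and use the independence of the two coins, revealing transcripts one at a time for adaptive nonces. The gap is in the key step, your parametrization of the opening. You take $(r_1,r_2)$ uniform in $\Zq^2$ and then set $D:=g_1^{r_1}g_2^{r_2}A^{-d}$. But $d=H(A,D,\vk_0,M',a,t,N,N_U)$ is computed from $D$, so the description is circular. This is the simulator of the non-frameability proof, and it works there only because the random oracle is programmed at~$d$. Here the claim is perfect against an unbounded adversary, and nothing is programmed.

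The claim is also false for a fixed hash. Write $d_D$ for the challenge computed from $D$. The map $(a_1,a_2)\mapsto(usd_D+a_1,sd_D+a_2)$ sends the fibre over $D$ onto the representations of $A^{d_D}D$. Two values $D\neq D'$ with $A^{d_D}D=A^{d_{D'}}D'$ send their fibres onto the same set. So $(r_1,r_2)$ need not be uniform on $\Zq^2$, and your description does not give the real distribution.

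The repair is to order the values as the paper does. The pads are uniform, so $D$ is uniform on $\G$, and given $D$ the pads are uniform over the $q$ representations of~$D$. The challenge $d$ depends on the pads only through $D$, and $\sigma$ depends on the coin values only through $m=(A,D,\vk_0)$. Hence conditioning on $(A,D,d,\sigma)$ keeps the pads uniform over those representations. For fixed $(u,s,d)$, the shift $(a_1,a_2)\mapsto(usd+a_1,sd+a_2)$ is a bijection from the representations of $D$ onto those of $A^dD$. So the opening is uniform over $\{(r_1,r_2):g_1^{r_1}g_2^{r_2}=A^dD\}$ whatever $u$ is. With this in place of your third bullet, the rest of your argument goes through.

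The obstacle you anticipate, decoupling $\sigma_j$ from $v_j$, is already settled by the lemma as stated. The real care is needed where $d$ depends on $D$ through the hash.
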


\begin{proof}[sketch]
For a fixed accepting view, each pair of a coin value and a valid signature is reached by exactly one choice of $(e_1,e_2)$, which gives Lemma~\ref{lem:issue-view}; this is Brands' count of the holder's choices that match a withdrawal view to a payment view~\cite[Prop.~11]{brands1993cash}, and $\vk_0$ and the pads enter the holder's only message through~$c'$. Given the bank's view, for an account value $u$, which has $u+w_2\neq0$ by the registration check, a uniform $s\neq0$ makes $A=g_1^{s(u+w_2)}$ uniform over $\G\setminus\{1\}$; given $(A,D,d)$ and~$\sigma$, the opening $(r_1,r_2)$ is uniform over the $q$ representations of $A^dD$; and $\rho$ depends only on $\sk_0$ and~$d$. None of these depends on~$u$, so a redemption transcript is independent of the account, though not of~$s$, which $A$ determines. The full proofs are in Appendix~\ref{app:unlink}.
\qed
\end{proof}

The variant changes the accountability model. The bank never sees $(A,D)$, so it cannot store the per-coin record $\rho_w$; a double-spend recovers the registered value~$u$, which names the account~\cite{brands1993cash,camenisch2005compact}. That tracing rests on Brands' assumption that the protocol is \emph{restrictive}, so that every coin a holder obtains has $A=(Yg_2)^s$ for an $s\neq0$ it knows, a nonstandard assumption~\cite{brands1993cash,camenisch2005compact}, and the bank's unforgeability needs sequential issuance, since concurrent blind Schnorr issuance falls to a polynomial-time attack on the underlying ROS problem~\cite{schnorr2001blind,benhamouda2021ros}. The variant thus trades the per-coin identity and its non-frameable delegation for a per-account value, and the claim that one secret spans all three regimes is for the non-anonymous token. Combining anonymity with per-coin non-frameable delegation is open.

\section{Discussion}\label{sec:discussion}

\begin{table}[t]
  \caption{The enforcement regimes. All are stated against the same intended use~$I$; they differ in what happens to an off-purpose operation and in what the guarantee rests on.}
  \label{tab:regimes}
  \centering\small
\setlength\tabcolsep{4pt}
\begin{tabular}{@{}>{\raggedright}p{2.6cm}>{\raggedright}p{2.9cm}>{\raggedright}p{2.3cm}>{\raggedright\arraybackslash}p{3.4cm}@{}}
  \toprule
  Regime & a violation is & rests on & examples \\
  \midrule
  Confinement (UP)  & infeasible to produce & hardness & functional and policy-based sig., token attenuation \\
  Confinement (IND) & infeasible to learn & hardness & constrained PRF, FE \\
  Mediation, held   & refused by a monitor that holds the key & integrity of the monitor & HSM or TEE key, PKCS\#11 \\
  Mediation, gated  & rejected by a verifier & honest verifier & X.509 key usage, OAuth/JWT, token redemption \\
  Accountability    & possible; over-use is traced & log and $\Trace$ soundness & e-cash double-spend, traitor tracing \\
  \bottomrule
\end{tabular}

\end{table}

The confinement games coincide with existing notions by construction, the use bound of Definition~\ref{def:trace} is e-cash over-spend tracing, and the token is assembled from known parts. What the common specification changes is the target: each guarantee is stated against the intended use~$I$, not against the predicate a primitive enforces. The game of a functional or policy-based signature is stated against the range of $f$ or the policy, so it cannot express that the predicate itself permits too much; the coarse build-key predicate of Section~\ref{sec:signing} is unforgeable in the primitive's game and insecure in Game~\plgame{CONF-UP}. Stating every regime against one~$I$ also relates the regimes: verifier-gated mediation follows from unpredictability confinement and an honest verifier, and a use bound needs a stateful monitor or accountability. Table~\ref{tab:regimes} places mechanisms from functional cryptography, hardware key stores, e-cash, and capability tokens in this common form.

The token's results are proved one regime at a time. We prove neither a joint security statement for the token nor a general composition theorem, which would state when a scheme that meets one regime's game for the content of a purpose and another's for its count meets a combined game for the whole purpose. Such a theorem would give the specification a role beyond organizing existing notions; we leave it open.

\subsection{Limitations}\label{sec:limits}

\paragraph{The release problem.}
Cryptography governs access to a capability, not the use of released plaintext: once the analytics firm or model trainer holds cleartext, no key can take it back. This is the cryptographic analogue of the ``analog hole'' in digital rights management (DRM)~\cite{erickson2003drm}, and a formal impossibility would need a model of release. Functional encryption avoids release, since the plaintext never appears, but does not govern it. Release can be addressed only by mediation, with a trusted execution environment keeping plaintext inside an attested boundary, or by accountability after the fact. The ``train one model'' example is of this kind, and we construct no scheme for it.

\paragraph{Authorship of the proxy.}
A mechanism enforces the proxy~$A$, so someone must write $A$ and audit its fidelity to~$I$ (Section~\ref{sec:intro}). Purpose-limitation cryptography leaves this policy problem in place and moves the trust from ``the monitor honors the flag'' to ``the proxy is faithful to the purpose''. Given a faithful proxy, a confinement guarantee holds under the stated assumptions without a trusted monitor; given an unfaithful one, the mechanism enforces the wrong predicate.

\paragraph{Legal alignment.}
A scheme can enforce the specification part of GDPR Art.~5(1)(b), in that the secret cannot be used outside the declared proxy, but the compatible-use test of Art.~6(4) is a controller's context-dependent judgment~\cite{wp29-2013purpose}. M\"uhlhoff and Ruschemeier argue that the doctrine cannot be traced at machine-learning scale~\cite{muhlhoff2025purpose}. The framework does not resolve that question; it moves the difficulty to writing and auditing the proxy. This, and reviving purpose limitation in data-driven systems more broadly~\cite{finck2021reviving,wolf2021messaging}, is interdisciplinary future work.

\paragraph{Evolving purposes and further constructions.}
Purposes change over time, and usage control treats control as continuous rather than one-shot~\cite{park2004ucon}; how attenuation, revocation, and time-bounded purposes compose is open. So are a mediation construction that binds a decryption key to an attested trusted execution environment and an accountability construction that traces off-purpose use of a trained model.

\section{Related Work}\label{sec:related}

\paragraph{Unifying frameworks and delegatable credentials.}
Cryptographic agents~\cite{agrawal2015agents} give one schema-based security notion for functional encryption, obfuscation, and related secrecy primitives; they cover the indistinguishability side, without an unforgeable-capability form, a delegation lattice, or an accountability regime. Access control encryption~\cite{damgard2016ace} governs which parties may send and receive information, with a sanitizer in our mediation position, and so combines confinement and mediation for one capability. K\"usters, Truderung, and Vogt~\cite{kusters2010accountability} define accountability in general; traceable limitation is a purpose-specific instance over usage logs. Delegatable anonymous credentials~\cite{chase2006sok,belenkiy2009delegatable} give monotone attenuation along anonymous chains; Section~\ref{sec:token-compare} compares them, macaroons, SPKI/SDSI~\cite{rfc2693}, and e-cash with the token.

\paragraph{Confinement.}
Functional encryption and its relatives confine what a ciphertext reveals~\cite{sahai2005fuzzy,goyal2006abe,bethencourt2007cpabe,katz2008predicate,boneh2011fe,abdalla2015simple}; constrained and puncturable PRFs confine which inputs a key evaluates~\cite{boneh2013constrained,boyle2014functional,kiayias2013delegatable,sahai2014io}; functional, policy-based, attribute-based, and sanitizable signatures confine which messages a key signs~\cite{boyle2014functional,bellare2014policy,maji2011abs,ateniese2005sanitizable}; and conditional proxy re-encryption confines a delegated transformation~\cite{blaze1998proxy,ateniese2005proxy,weng2009conditional}. Each is an instance of a confinement game with the primitive's predicate as the proxy.

\paragraph{Mediation.}
X.509 and PKCS\#11 usage flags are the deployed ``key for a purpose'', mediation without confinement at the key level (Section~\ref{sec:mediation-accountability}); the attacks on PKCS\#11 show that the flags carry no cryptographic hardness~\cite{rfc5280,pkcs11v3,clulow2003pkcs11,bortolozzo2010pkcs11,cachin2009token}. Macaroons and the OAuth/JWT token family attenuate and bind bearer credentials checked by verifiers~\cite{birgisson2014macaroons,rfc6749,rfc8707,rfc7519,rfc9449}. Macaroons sit between hardness and mediation: attenuation is unforgeable to bearers, but the verifier enforces the meaning of a caveat and holds the root key.

\paragraph{Accountability.}
Traitor tracing attributes leaked keys~\cite{chor1994tracing,boneh2006traitor}, non-transferable credentials deter sharing~\cite{camenisch2001credentials}, and e-cash makes a coin a one-use token whose double-spend identifies the withdrawer~\cite{chaum1982blind,chaum1988ecash,brands1993cash,camenisch2005compact,camenisch2004bilinear}, the pattern the token adopts.

\paragraph{Purpose in access control, usage control, and law.}
Purpose-based access control~\cite{byun2005purpose,byun2008purpose} attaches intended purposes to data and access purposes to requests, and grants access when they comply; usage control (UCON)~\cite{park2004ucon} adds obligations, conditions, and continuous enforcement; Hippocratic databases~\cite{agrawal2002hippocratic}, policy languages~\cite{p3p2002,epal2003}, and sticky policies~\cite{karjoth2002epp,pearson2011sticky} carry purpose with the data. These lines specify an intended-use relation, often richer than ours since it can be stateful and depend on context, and enforce it through a reference monitor; in our terms they specify~$I$ by policy and enforce it by mediation. The difference we address is narrow: under confinement the secret itself enforces the proxy by hardness, so the limit holds when no monitor is present or the monitor is bypassed. Purpose has also been formalized for audit, as a plan~\cite{tschantz2012purpose} or a workflow~\cite{jafari2011purpose} against which behavior is tested after the fact, close in spirit to accountability; contextual integrity identifies the observable flow parameters (sender, recipient, information type, transmission conditions)~\cite{nissenbaum2004contextual,barth2006contextual}, the kind of observable over which an intended-use relation is written. The legal texts~\cite{oecd1980guidelines,gdpr2016,wp29-2013purpose} define purpose limitation as a principle, and the confinement idea goes back to Lampson~\cite{lampson1973confinement}. The closest cryptographic treatment uses attribute-based encryption to enforce purposes rather than defining a property~\cite{syedwinkler2022purpose}.

\section{Conclusion}\label{sec:conclusion}

Purpose limitation can be defined and proved as a property of a secret once two objects are kept apart: the admissible set $A(P)$ that a mechanism permits and the intended use~$I$ that the issuer declares, against which confinement, mediation, and accountability give three different guarantees. The confinement games recover existing notions in key-query form; stated against~$I$, they also register a predicate that permits too much and relate the regimes, and one payment token uses all three on a single secret. The paper introduces no new primitive or hardness assumption, and a general composition theorem is open. A mechanism enforces $A(P)$, not the human purpose, so fidelity to that purpose remains a declared, non-cryptographic obligation; within that boundary, ``this key can be used only for the operations in $A(P)$'' is a statement one can prove.

\begin{credits}
\subsubsection{\ackname}
The authors are grateful to Subhashis Banerjee, Ashoka University, for motivating the idea of purpose limitation. Thakkar is supported by the Anusandhan National Research Foundation, under the Prime Minister Early Career Research Grant ANRF/ECRG/2025/001136/ENS, and by the Vachani School of Advanced Computing, Ashoka University, under the Young Faculty Grant (00281).
\end{credits}

\bibliographystyle{splncs04}
\bibliography{references}

\appendix
\section{\texorpdfstring{Proofs of Propositions~\ref{prop:fs} and~\ref{prop:cprf}}{Proofs of Propositions 1 and 2}}\label{app:propositions}

\begin{proof}[of Proposition~\ref{prop:fs}]
Forward: given $\cA$ winning \plgame{CONF-UP}, the reduction plays the unforgeability game, receives $\vk$, sets $\pp:=\vk$, forwards each $\oracle{Constrain}(f)$ to its key oracle, and returns $\cA$'s output $(m^*,\sigma^*)$. $\cA$'s win gives $\Verify(\vk,m^*,\sigma^*)=1$ and $m^*\notin\bigcup_i\range(f_i)$, a forgery; no signing query is made. Back: a key-query forger is a \plgame{CONF-UP} adversary under the same translation. Both directions preserve the advantage.
\qed
\end{proof}

\begin{proof}[of Proposition~\ref{prop:cprf}]
The two games coincide after renaming, with the simulator that outputs the entitled leakage when it is nonempty and a uniform value otherwise. $\oracle{Constrain}(C)$ is the constrained-key oracle, and all its queries precede the challenge. If the challenge point $x^*$ is admissible for some queried~$C_i$, the entitled leakage is $F_k(x^*)$ itself and the two experiments are identical. Otherwise $C_i(x^*)=0$ for every queried~$C_i$, that is, $x^*\notin\bigcup_i I(C_i)$; the entitled leakage is empty, and real-versus-ideal is $F_k(x^*)$-versus-uniform, the constrained-PRF challenge. Advantages are equal. The proposition is for one challenge point, since a hybrid over several points needs $F_k$ at the other non-admissible points, which a game without an evaluation oracle does not supply.
\qed
\end{proof}

\section{\texorpdfstring{Proof of Theorem~\ref{thm:nonframe}}{Proof of Theorem 3}}\label{app:nonframe}

\begin{proof}
A framing makes $\Trace(\pp,(A,D),L,R)$ name an honest account~$U$, necessarily through a record verifying under $\avk_U$, for an instance $U$ never withdrew, or withdrew but neither delegated nor spent twice. Split its probability as $\varepsilon=\varepsilon_{\mathsf{forge}}+\varepsilon_{\mathsf{kept}}$ by whether $U$ signed the naming record.

\emph{Branch~1 (forged record).} A reduction guesses the framed account among the $n=\poly(\lambda)$ honest accounts, plants its EUF-CMA challenge key as that $\avk_U$, answers the account's withdrawals with its signing oracle, and outputs the naming record. Since $\rho_w$ signs $(Y,A,D)$, a record for an instance $U$ never withdrew, or with a swapped pseudonym, is such a forgery. So $\varepsilon_{\mathsf{forge}}\le n\cdot\Adv{EUF}$.

\emph{Branch~2 (genuine record).} The instance is one $U$ withdrew and neither delegated nor spent twice. $\cB$ receives $\Gamma\in\G$ and must output $\log_{g_1}\Gamma$. It guesses the framed instance among the $N=\poly(\lambda)$ honest withdrawals and plants $Y^*:=\Gamma$ as its pseudonym, without knowing $u^*=\log_{g_1}\Gamma$. Every other honest coin gets a coin identity $\cB$ samples, so $\cB$ answers issuance, spends, and $\oracle{deleg}$ for them normally. $\cB$ uses the public $(g_1,g_2)$ and never needs~$w_2$; it generates the bank keys if the bank is honest, and otherwise plays honest holders to $\cA$'s bank. It holds the honest accounts' keys, so it signs records normally.

\emph{The planted coin.} $\cB$ picks $s\gets\Zq$ and $(\sk_0,\vk_0)\gets\Kg$, sets $A:=(\Gamma g_2)^s$, and simulates $\pi$ for the true statement $Y^*=g_1^{u^*}\wedge A=(Y^*g_2)^s$ by programming the random oracle. It picks the spend values $d,r_1,r_2\gets\Zq$ in advance, sets $D:=g_1^{r_1}g_2^{r_2}A^{-d}$, signs the record on $(Y^*,A,D)$, and obtains $\sigma$ on $(A,D,P_0,\vk_0)$. When the owner spends the coin, through a handle whose chain ends at $\vk_n$, on $(M',a,t)$ with merchant nonce~$N$, $\cB$ picks a fresh $N_U$, programs $H(A,D,\vk_n,M',a,t,N,N_U):=d$, and sends the opening with $\rho:=\Sg_{\sk_n}(d)$ and that handle's chain. Since $N_U$ is fresh and uniform, the programmed point is unqueried except with negligible probability, even if a corrupt bank fixed $(A,D)$ and the transaction in advance, and (V1)--(V5) hold by construction. If the planted instance is later delegated or spent again, it is no longer a framing target and $\cB$ aborts, losing no framing on it. If the owner never spends it, the coalition must produce both openings itself.

Unless the programmed point was queried earlier or $\cB$ aborts, the simulation is perfect. In a real spend, $(r_1,r_2)=(usd+a_1,sd+a_2)$ is uniform in $\Zq^2$ with $D=g_1^{r_1}g_2^{r_2}A^{-d}$, which is the distribution $\cB$ samples; $\rho$ is a real signature, and the oracle is programmed at one high-entropy point.

\emph{Extraction.} A framing on the planted instance makes $\Trace$ recover the recorded pseudonym $Y^*=g_1^{u^*}$, which happens exactly when the linear recovery from the two openings returns~$u^*$. Openings that encode another representation of $A$ give some $u'\neq u^*$ and attribute elsewhere, so they do not frame~$U$. Hence $\cB$ outputs $u^*$, and $\Adv[\cB]{DL}\ge\varepsilon_{\mathsf{kept}}/N-\negl(\lambda)$, the negligible term collecting oracle clashes, $d=d'$ collisions, and $s=0$. With Branch~1, $\varepsilon\le n\cdot\Adv{EUF}+N\cdot\Adv{DL}+\negl(\lambda)$.
\qed
\end{proof}

\section{\texorpdfstring{Proofs of Lemma~\ref{lem:issue-view} and Theorem~\ref{thm:unlink}}{Proofs of Lemma 2 and Theorem 4}}\label{app:unlink}

\begin{proof}[of Lemma~\ref{lem:issue-view}]
Fix an accepting view $v$, which contains the bank's messages $\alpha,\beta,\gamma$, the account's $z=(Yg_2)^x$, and the holder's message~$c$. The holder's randomness is $(w,e_1,e_2)$, independent of the bank's coins, with $(e_1,e_2)$ uniform on $(\Zq\setminus\{0\})\times\Zq$. For each $w$ and $\zeta=(z',\alpha',\beta',\gamma')\in S(m)$, exactly one pair $(e_1,e_2)$ makes the holder, on $w$ against~$v$, send $c$ and output~$\zeta$.

Every output is valid and has $z'=z^s=A^x$, since $g_0^{\gamma'}=(h^c\alpha)^{e_1}g_0^{e_2}=h^{c'}\alpha'$ and $A^{\gamma'}=(z^c\beta)^{se_1}A^{e_2}=(z')^{c'}\beta'$ by the holder's checks. Conversely, let $\zeta\in S(m)$ have hash value~$c'$. Hash values are nonzero, so $c\neq0$; set $e_1:=c'/c$ and $e_2:=\gamma'-e_1\gamma$. By the checks,
\begin{align*}
  \alpha^{e_1}g_0^{e_2}&=g_0^{\gamma'}(\alpha g_0^{-\gamma})^{e_1}=g_0^{\gamma'}h^{-c'}=\alpha',\\
  \beta^{se_1}A^{e_2}&=A^{\gamma'}\bigl(\beta(Yg_2)^{-\gamma}\bigr)^{se_1}=A^{\gamma'}(z')^{-c'}=\beta',
\end{align*}
so the holder computes the same $\alpha',\beta'$, hence the same $c'$, sends $c'/e_1=c$, and outputs $\gamma'=e_1\gamma+e_2$. The relations $c=c'/e_1$ and $\gamma'=e_1\gamma+e_2$ make the pair unique. A valid signature with first component $A^x$ is $(A^x,g_0^{\tau},A^{\tau},\tau+xc')$ for $\tau=\gamma'-xc'$, so $|S(m)|=q$ for every~$m$. The bank's messages depend on its coins and on~$c$, so their probability is a factor $P(v)$ that does not depend on $(w,\zeta)$, and
\[
  \Pr[V=v,\,W=w,\,\sigma=\zeta]=P(v)\cdot\Pr[W=w]\cdot\frac{[\,\zeta\in S(m)\,]}{q(q-1)}.
\]
Summing over $\zeta$ and $w$ gives $\Pr[V=v]=P(v)/(q-1)$, and dividing gives the lemma. The hash enters only through $c'=H(m,z',\alpha',\beta')$, so adding $\vk_0$ to $m$ changes nothing, and the pads and $\sk_0$ reach the bank only through~$m$.
\qed
\end{proof}

\begin{proof}[of Theorem~\ref{thm:unlink}]
\emph{Unlinkability game.} The adversary $\cA$ plays the bank and all merchants. It runs two issuance sessions, for honest accounts of its choice, under the key for one maximal purpose~$P_0$ and names one transaction $(M',a,t)$. A challenger flips $b\gets\{0,1\}$ and redeems each coin once for $(M',a,t)$, answering $\cA$'s merchant nonces, in the order $(b,1-b)$. $\cA$ outputs $b'$ and wins if $b'=b$. Both coins share $P_0$ and $(M',a,t)$, since these are public in the transcript.

\emph{Step~1 (conditioning on issuance).} Let $V$ be $\cA$'s view of both issuance sessions, which includes the accounts $Y_1,Y_2$. Sessions are sequential and no coin is redeemed before both end, so Lemma~\ref{lem:issue-view} applies to each session in turn. Given~$V$, the coin values $W_1,W_2$ are independent with their honest distribution, and each $\sigma_j$ is uniform on $S(m_j)$, independently of everything else given~$m_j$.

\emph{Step~2 (hiding the account).} Every distribution below is conditioned on~$V$. Besides the public $(M',a,t)$ and $\cA$'s nonce~$N$, a redemption reveals $(A,D,\vk_0,\sigma,N_U,r_1,r_2,\rho)$. The nonce $N_U$ is fresh. Fix the account value~$u$; the registration check gives $Yg_2\neq1$, so $u+w_2\neq0$. Then $A=(Yg_2)^s=g_1^{s(u+w_2)}$ is uniform over $\G\setminus\{1\}$. The pads are uniform and independent of $s$ and~$\sk_0$, so $D$ is uniform, and given~$D$ the pads are uniform over the $q$ representations of~$D$. The challenge $d=H(A,D,\vk_0,M',a,t,N,N_U)$ depends on the pads only through~$D$, and $\sigma$ depends on the coin values only through $m=(A,D,\vk_0)$, so conditioning on $d$ and $\sigma$ does not change this. For fixed $(u,s,d)$ the map $(a_1,a_2)\mapsto(usd+a_1,sd+a_2)$ is a bijection from the representations of $D$ to those of $A^dD$, so, given $(A,D,d,\sigma)$, the opening is uniform over $\{(r_1,r_2):g_1^{r_1}g_2^{r_2}=A^dD\}$. The spend authorization is $\rho=\Sg_{\sk_0}(d)$, and $\sk_0$ given $\vk_0$ is distributed by $\Kg$ independently of $u$, $s$, the pads, and~$\sigma$, so $\rho$ given $(\vk_0,d)$ does not depend on the account. None of these distributions depends on~$u$ or on the session, so each redemption transcript is distributed as the same function of the public $(P_0,M',a,t)$, $\cA$'s nonce, and fresh randomness. The two coins are independent given~$V$, so $\cA$'s view, including nonces chosen after the first redemption, has the same distribution for $b=0$ and $b=1$, and the advantage is~$0$.
\qed
\end{proof}

\end{document}